\begin{document}

\title{\setlength{\baselineskip}{2em}Macroscopic Determinism in Interacting Systems Using Large Deviation Theory}
\author{Brian R. La Cour\footnote{To whom correspondence should be addressed; e-mail: {blacour@arlut.utexas.edu}} \footnote{Current address: Applied Research Laboratories, The University of Texas at Austin, P.O. Box 8029, Austin, Texas 78713-8029} and  William C. Schieve\footnote{Ilya Prigogine Center for Studies in Statistical Mechanics and Complex Systems, Department of Physics, University of Texas at Austin, Austin, Texas 78712}}
\date{December 17, 2001}
\maketitle

\begin{abstract}
\setlength{\baselineskip}{2em}
We consider the quasi-deterministic behavior of systems with a large
number, $n$, of deterministically interacting constituents.  This work
extends the results of a previous paper [\textit{J. Stat.\ Phys.}\
\textbf{99}:1225--1249 (2000)] to include vector-valued observables on
interacting systems.  The approach used here, however, differs
markedly in that a level-1 large deviation principle (LDP) on joint
observables, rather than a level-2 LDP on empirical distributions, is
employed.  As before, we seek a mapping $\psi_{t}$ on the set of
(possibly vector-valued) macrostates such that, when the macrostate is
given to be $a_0$ at time zero, the macrostate at time $t$ is
$\psi_{t}(a_0)$ with a probability approaching one as $n$ tends to
infinity.  We show that such a map exists and derives from a
generalized dynamic free energy function, provided the latter is
everywhere well defined, finite, and differentiable.  We discuss some
general properties of $\psi_{t}$ relevant to issues of irreversibility
and end with an example of a simple interacting lattice, for which an
exact macroscopic solution is obtained.
\end{abstract}

\noindent\textbf{KEY WORDS:}\quad determinism; causality; large deviation theory; many-particle systems; fluctuations; nonequilibrium statistical mechanics; cellular automata


\section{Introduction}
\label{sect:intro}

Macroscopic determinism is a term we have used in a previous work \cite{LaCour2000} to describe the quasi-deterministic behavior of certain macroscopic observables on systems with many degrees of freedom.  The idea is that, given the initial macrostate of such an observable, its state at any future (or past) time is predictable with a probability approaching one as the number of degrees of freedom tends to infinity.  Thus, macroscopic determinism is equivalent to convergence in probability for each given time and so corresponds to a kind of law of large numbers.

In the natural world, macroscopic determinism is a quite familiar phenomenon.  It is manifest in hydrodynamics, as described by the Navier-Stokes equation, as well as in the laws of mass diffusion and chemical reactions, to name a few.  Traditionally, these macroscopic laws have been obtained either empirically or through approximations of the exact microscopic dynamics.  Examples of this approach include the Boltzmann and master equations, which become exact only in the weak interaction, long time (``$\lambda^2 t$'') and Grad limits \cite{Spohn,Davies,Grad}.  The early work of van Kampen and others \cite{vanKampen1961,Kurtz1972}, for example, gives a systematic method of obtaining macroscopic differential equations starting from an assumed master equation whose transition rates possess special extensive properties.
In a similar approach, Onsager and Machlup in their well-known 1953 paper \cite{Onsager_and_Machlup1953} on thermodynamic fluctuations use a Langevin model to predict the probabilistically most likely trajectory of a set of macroscopic observables in the near-equilibrium regime.  Much later Eyink, using a large deviation theorem from Kipnis, Olla, and Varadhan \cite{Kipnis1989}, extended this classic result to the nonlinear regime for a stochastic exclusion process on a lattice gas, validating rigorously an earlier proposal by Graham \cite{Graham1981}.  More recently, the Onsager principle and Onsager-Machlup theory of fluctuations has been studied from the point of view of large deviations by Oono \cite{Oono1993} using a time-averaging approach.  Our approach differs markedly from these more traditional methods in that we consider only rigidly deterministic microscopic dynamics rather than an underlying stochastic process.

It should be remarked that the view of macroscopic time evolution presented here differs from that of traditional approaches which attempt to derive a differential equation, usually first order in time, for the macroscopic variables of interest.  This is the approach taken, e.g., by Mori \cite{Mori1958}, van Kampen \cite{vanKampen1961}, Emch and Sewell \cite{Emch_and_Sewell1968}, and many others.  Common to this approach is the use of a ``mesoscopic'' time scale over which the microscopic dynamics are assumed to be sufficiently complex that the macrostates obey (approximately) a Markov property, from which a first-order macroscopic law may be derived.  In our approach there is no such coarsening of time, and, furthermore, there is no reason to demand this as a necessary condition for defining macroscopic time evolution.  In addition, our approach has important implications for understanding irreversibility, for it implies that a semigroup property and monotonic approach to equilibrium are not to be expected in general, unlike the Markovian case.  In a future paper \cite{LaCour2002} we shall address more specifically the origin and validity of first order macroscopic laws by a considering a derivation of the Onsager linear relations from results of the present work.

Although we focus on time as a fixed parameter, our formalism accomidates an asymptotic rescaling from microscopic to macroscopic time scales as well.  Thus, one may consider the time-averaged macroscopic observable
\begin{equation}
\tilde{G}_{n,\tilde{t}} := \frac{1}{\tau_n} \int_{(\tilde{t}-1/2)\tau_n}^{(\tilde{t}+1/2)\tau_n} G_{n,t} \; \d t,
\end{equation}
for example, where $\tau_n\to\infty$ as $n\to\infty$.  Macroscopic determinism then corresponds to the convergence in probability of $\tilde{G}_{n,\tilde{t}}$, conditioned on an initial value of $a_0$, to some $\tilde{\psi}_{\tilde{t}}(a_0)$.  Such a description may be appropriate if a typical realization of $G_{n,t}$ is, say, of the form $\tau_n \e^{-\gamma t/\tau_n}$.  We assert, however, that a rescaling of the time axis is not always necessary to achieve macroscopic determinism.  Indeed, for the example considered here, no such rescaling is needed.

There is an unfortunate and inevitable lack of examples in which macroscopic laws are obtained from the \emph{exact} microscopic dynamics.  (See, however, \cite{Huerta1971}.)  This situation is inevitable, as by its very nature macroscopic determinism concerns systems with an intractable number of degrees of freedom.  It is unfortunate, however, as much confusion concerning deep questions of irreversibility and determinism has become entangled in the complexity of approximation schemes needed to obtain tractable solutions \cite{Sklar,Lebowitz1999}.  In an earlier work \cite{LaCour2000}, we addressed this problem with some generality for the case of bounded, extensive, and scalar-valued observables on systems of dynamically independent and identical particles.  This approach used results from the theory of large deviations \cite{Dembo_and_Zeitouni,Ellis} which were not readily transferable to the case of interacting systems.

The present work addresses the case of interacting systems using a significantly different and greatly more general approach.  The basic problem at hand consists in determining the time evolution of a time-dependent macroscopic observable $G_{n,t}$, which may in general be vector valued, when conditioned on a given value $a_0$ of the initial observable $G_{n,0}$.  (The microscopic dynamics are assumed to be time translation invariant, and $t$ may be positive or negative.)  The parameter $n$ is a macroscopic index representing, e.g., the number of particles in the system.  The approach used here is to consider the joint distribution of $(G_{n,0}, G_{n,t})$, then impose a conditioning constraint such that, as $n$ tends to infinity, $G_{n,0}$ converges in probability to $a_0$ while $G_{n,t}$ converges to some $a_t$ to be determined.  If, given $a_0$, there is a uniquely corresponding $a_t$, then this relation defines a macroscopic map, $\psi_{t}$, such that $a_0 \mapsto a_t = \psi_t(a_0)$.  In this case, it will be shown that the predicted macrostate takes the form of a canonical expectation in which the analogous temperature parameter is selected according to $a_0$.

We begin in Sec.\ \ref{sec:problem_description} with a more detailed description of the general problem being considered and an outline of the general approach used.  The main results are contained in Sec.\ \ref{sec:ldp_approach}.  Both standard and some novel techniques from the theory of large deviations are employed to obtain the predicted final macrostate, $a_t$, from the specified initial macrostate, $a_0$, and to establish convergence in probability of $G_{n,t}$ to $a_t$ when $G_{n,0}$ is conditioned on $a_0$.  The main tool of investigation is a generalized free energy function, which may be used to construct the desired predictions and analyzed to establish their validity.  The following section, Sec.\ \ref{sec:macroscopic_dynamics}, discusses some general properties of the macroscopic dynamics that may be inherited from the underlying microscopic dynamics, including time symmetry, equilibration, and the formation of a semigroup.  The final two sections examine particular applications.  In Sec.\ \ref{sec:extensive_noninteracting}, results from \cite{LaCour2000} for noninteracting systems are rederived and extended using our new approach, while Sec.\ \ref{sec:binary_lattice} contains a detailed study of a simple interacting lattice with an exact macroscopic solution.  Our results are summarized and discussed in Sec.\ \ref{sec:discussion}.


\section{Problem Description}
\label{sec:problem_description}

Consider a microscopically deterministic system whose collective microstate is given by a point in a phase space $X_n$ and for which there is a Borel measurable function $\Phi_{n,t}$ which maps any microstate $x_0 \in X_n$ at time zero to its future/past value, $\Phi_{n,t}(x_0)$, at time $t \in \mathbb{R}$.  We shall assume for simplicity that the microscopic dynamics form a group; i.e., that $\Phi_{n,t}\circ\Phi_{n,s} = \Phi_{n,t+s}$.  On this phase space we further suppose there is given a Borel probability measure $P_n$, invariant under $\Phi_{n,t}$, which may be interpreted as the \textit{a priori} distribution of an ensemble of initial microstates.  The choice of an \textit{a priori} measure is a familiar yet subtle problem which shall not be addressed.  Here we merely assume that a suitable \textit{a priori} measure has been found.  It should be noted, however, that the restriction to dynamical groups and invariant measures, while physically reasonable, is not an essential one and may be relaxed with only minor changes in the results.

Now consider a measurable, vector-valued function $G_n: X_n \rightarrow M$, where $M \subseteq \mathbb{R}^d$ and $d$ is a positive integer.  The function $G_n$ is chosen to represent an arbitrary set of macroscopic variables of interest.  Given an initial \emph{microstate} $x_0$ (``initial'' here refers to time zero), the initial \emph{macrostate} is $G_n(x_0)$ and, similarly, the macrostate at time $t$ is $G_{n}(\Phi_{n,t}(x_0))$.  We therefore define $G_{n,t} \equiv G_n\circ\Phi_{n,t}$ as the macroscopic observable at time $t$.  (Note that $G_{n,0} = G_n$ from the group property.)

Clearly, $G_{n,t}$ is a deterministic function of the initial microstate.  The problem we wish to consider, however, is the following: given an initial \emph{macrostate} $a_0$ such that $G_{n,0}(x_0) = a_0$ for some \emph{unspecified} microstate $x_0$, what is the expected macrostate, $a_t$, at time $t$?  The term ``expected'' is here meant in the sense of an overwhelmingly most probable value, so we have now introduced the notion of a distribution of initial microstates corresponding to the given constraint on the initial macrostate.  This distribution derives from the assumed \textit{a priori} distribution $P_n$ and, for the above constraint, takes the form of a conditional probability $P_{n,a_0}$, assuming that a regular conditional distribution does indeed exist.  (Later we shall circumvent questions of regularity by conditioning on a set of nonzero probability.)

Considering $n$ as a macroscopic index representing, say, the number of particles in the system, we may consider the asymptotic behavior of $P_{n,a_0}$ as $n\rightarrow\infty$.  In particular, we expect that, if a law of large numbers holds, $a_t$ should be given by the limiting expectation value of $G_{n,t}$ with respect to $P_{n,a_0}$, i.e.,
\begin{equation}
a_t = \lim_{n\rightarrow\infty} \int_{X_n} G_{n,t} \; \d P_{n,a_0}.
\end{equation}
Provided a suitable sequence of conditional probability measures can be constructed for which the law of large numbers holds, the above asymptotic conditional expectation gives the desired mapping $a_0 \mapsto a_t$ from the initial macrostate to its value at time $t$.

The difficulty with this approach is, of course, in constructing the above conditional probability measures.  We therefore use a different approach in which one considers the joint distribution of $(G_{n,0},G_{n,t})$ under the \textit{a priori} measure $P_n$ and conditions on a set $B_0$ of nonzero probability such that $\left.(G_{n,0},G_{n,t})\right|_{B_0}$ converges to $(a_0,a_t)$ in probability, where $a_0$ is given and $a_t$ is to be determined.  To illustrate this approach, suppose $(G_{n,0},G_{n,t})$ converges in probability to the equilibrium value $(a_*,a_*)$, where
\begin{equation}
a_* \equiv \lim_{n\rightarrow\infty} \int_{X_n} G_n \; \d P_n.
\end{equation}
(Recall that $P_n$ is invariant.)  This situation is illustrated in Fig.\ \ref{fig:contours} for scalar-valued observables, where the contour plot shows a fiducial joint density of $G_{n,0}$ and $G_{n,t}$.  If we wish to condition on a value $a_0$ for $G_{n,0}$, a half-plane $B_0$ may be constructed whose boundary is the vertical line passing through $a_0$.  If the joint variables $(G_{n,0},G_{n,t})$ are restricted to $B_0$, and $B_0$ is of nonzero measure, then it is clear that $G_{n,0}$ will converge to $a_0$ in probability as $n\rightarrow\infty$, while $G_{n,t}$ converges to a value $a_t$ which, if the density level sets are convex, will uniquely maximize the probability along the vertical line through $a_0$.


\section{Conditional LDP Approach Using the Dynamical Free Energy}
\label{sec:ldp_approach}

As we are interested only in asymptotically most likely behavior of the observables, an alternative to constructing density level curves is to use a level-1 large deviation rate function \cite{Ellis}.  The rate function governs convergence in probability in a manner similar to that of the entropy in the thermodynamic limit as described by the Boltzmann-Einstein fluctuation formula and discussed, e.g., by Lanford \cite{Lanford1973} and Martin-L\"{o}f \cite{Martin-Lof}.


The rate function is most easily obtained from the following generalized free energy function, here referred to as the \emph{dynamic free energy}.
\begin{equation}
\Psi_t(\lambda_1,\lambda_2) \equiv \lim_{n\rightarrow\infty} v_n^{-1} \log \int_{X_n} \e^{v_n[ \lambda_1 G_{n,0} + \lambda_2 G_{n,t}]} \; \d P_n,
\end{equation}
where $\lambda_1$ and $\lambda_2$ are row vectors in $\mathbb{R}^d$ and the $v_n$ are positive scale factors such that $v_n \rightarrow \infty$ as $n\rightarrow\infty$.  (The values of $G_n$ are taken to be column vectors.)  Provided $\Psi_t$ is well defined, i.e., that the limit exists everywhere, it will be a convex function.  In what follows, we shall assume $\Psi_t$ is everywhere well defined, finite, and differentiable.  This, in fact, is the main physical assumption we make regarding the observable, $G_{n}$, the microscopic dynamics, as given by $\Phi_{n,t}$, and the choice of scaling, $v_n$.  An important theorem due to G\"{a}rtner and Ellis \cite{Dembo_and_Zeitouni} now implies that $(G_{n,0},G_{n,t})$ satisfies a large deviation principle (LDP) with a rate function $I_t$  given by the convex conjugate (in this case, the Legendre transform) of $\Psi_t$.

The rate function thus obtained is essentially strictly convex, meaning that it is strictly convex on the interior of the domain over which it is finite.  (We assume that $M$ is properly chosen so that $I_t$ is everywhere finite on $M \times M$.)  Thus, the rate function possesses a unique global minimum of zero and is positive everywhere else.  The point at which this global minimum is attained represents the equilibrium or \textit{a priori} most-probable macrostates for $G_{n,0}$ and $G_{n,t}$.  From the Legendre transform relation between $\Psi_t$ and $I_t$, we have
\begin{equation}
I_t(a_1,a_2) = \lambda_1 a_1 + \lambda_2 a_2 - \Psi_t(\lambda_1,\lambda_2),
\end{equation}
where $\lambda_1$ and $\lambda_2$, the \emph{conjugate macrostates}, satisfy the following:
\begin{subequations}\label{eqn:conjugate-to-macrostate}
\begin{align}
a_1 &= \nabla_1 \Psi_t(\lambda_1,\lambda_2), \\
a_2 &= \nabla_2 \Psi_t(\lambda_1,\lambda_2).
\end{align}
\end{subequations}
Taking $\lambda_1 = \lambda_2 = 0$ gives a value of zero for the rate function, so the equilibrium point is $(a_*,a_*)$, where
\begin{equation}
a_* = \nabla_1\Psi_t(0,0) = \lim_{n\rightarrow\infty} \int_{X_n} G_{n} \; \d P_n = \nabla_2\Psi_t(0,0),
\end{equation}
since $P_n$ is invariant.  Note that convexity of $\Psi_t$ permits the interchange of the limit and gradient operations.  (See Theorem 25.7 of \cite{Rockafellar}.)

The large deviation principle now implies that $(G_{n,0},G_{n,t})$ converges in probability to the equilibrium point $(a_*,a_*)$ as $n\rightarrow\infty$.  A point $(a_1,a_2)$ different from this equilibrium point corresponds to a fluctuation and, loosely speaking, has a probability which vanishes as $\exp[-v_n I_t(a_1,a_2)]$.  (We stress, however, that such a fluctuation refers to the variability in the unknown initial microstate and is not a temporal fluctuation.)  

It must again be emphasized that the above results hold under the critical assumption that the dynamic free energy is everywhere \emph{well defined}, \emph{finite}, and \emph{differentiable}.  The validity of these assumptions depends on the microscopic dynamics, as given by $\Phi_{n,t}$ in $G_{n,t} = G_{n}\circ\Phi_{n,t}$, as well as the choice of observables, $G_{n}$, scaling, $v_n$, and \textit{a priori} probability, $P_n$.  The one-to-one correspondence between macrostates $(a_1,a_2)$ and their conjugates $(\lambda_1,\lambda_2)$ is another important assumption which will be employed in the following discussion to establish a conditional LDP.  Note that this property will hold provided the Jacobian of $(\nabla_1\Psi_t, \nabla_2\Psi_t)$ exists everywhere and vanishes nowhere.  In Section \ref{sec:extensive_noninteracting}, an explicit example will be considered in which all these assumptions may be verified explicitly and rigorously.


Given the sequence $\seq{(G_{n,0},G_{n,t})}$ of random vectors which satisfy the LDP with rate function $I_t$, we may construct a related sequence of random vectors, $\seq{\left.(G_{n,0},G_{n,t})\right|_{B_0}}$, by conditioning on a subset $B_0$ of $M \times M$.  Provided $I_t$ and $B_0$ are suitable, this sequence will also satisfy the LDP.  Specifically, the following general theorem is proven in Appendix \ref{app:Conditional_LDP}.

\begin{theorem}[Conditional LDP]
\label{thm:Conditional_LDP}
Suppose the sequence $\seq{P_n}$ of probability measures satisfies the LDP with a good, finite, and continuous rate function $I$ on a Hausdorff space.  Let $B$ denote a nonempty Borel subset such that $\closure{\interior{B}} = \closure{B}$.  Then the sequence $\seq{P_n(\,\cdot\,|B)}$ of conditional probability measures satisfies the LDP with the following good, finite, and continuous rate function.
\begin{equation}
I_B(x) = 
\begin{cases}
I(x) - \inf I(B) & \text{if $x \in \closure{B}$} \\
\infty           & \text{otherwise}
\end{cases}
\end{equation}
\end{theorem}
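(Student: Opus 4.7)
The plan is to verify the upper and lower LDP bounds for the conditional measures separately, using the elementary decomposition $P_n(A \mid B) = P_n(A \cap B)/P_n(B)$ and estimating numerator and denominator from the LDP for $\seq{P_n}$. The first step is to establish the normalization
\begin{equation*}
\lim_{n\rightarrow\infty} v_n^{-1} \log P_n(B) = -\inf I(B).
\end{equation*}
The upper bound here follows from the LDP upper bound on the closed set $\closure{B}$, and the lower bound from the LDP lower bound on the open set $\interior{B}$. To match the two, I would invoke continuity of $I$ together with the hypothesis $\closure{\interior{B}}=\closure{B}$, which together force $\inf_{\interior{B}} I = \inf_{\closure{B}} I = \inf I(B)$ (the last equality because $B \subseteq \closure{B}$ and $I$ is continuous, so any point of $\closure{B}$ is approximable by a point of $B$ with nearly the same $I$-value).

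For a closed set $F$, I would use the inclusion $F \cap B \subseteq F \cap \closure{B}$ and apply the LDP upper bound to the closed set $F \cap \closure{B}$; dividing by $P_n(B)$ and taking $\limsup$ then yields
\begin{equation*}
\limsup_{n\rightarrow\infty} v_n^{-1}\log P_n(F \mid B) \leq -\inf_{F \cap \closure{B}} I + \inf I(B) = -\inf_F I_B.
\end{equation*}
For an open set $G$, the analogous move is the inclusion $G \cap \interior{B} \subseteq G \cap B$ combined with the LDP lower bound on the open set $G \cap \interior{B}$, giving $\liminf v_n^{-1}\log P_n(G\mid B) \geq -\inf_{G\cap\interior{B}} I + \inf I(B)$. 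The one point requiring care is the identity $\inf_{G \cap \interior{B}} I = \inf_{G \cap \closure{B}} I$: given $x \in G \cap \closure{B}$, openness of $G$ and the hypothesis $\closure{\interior{B}}=\closure{B}$ together ensure that $G \cap \interior{B}$ contains points arbitrarily near $x$, and continuity of $I$ then delivers the equality of infima and hence the desired lower bound $-\inf_G I_B$.

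It remains to verify that $I_B$ is a good, finite, continuous rate function. Finiteness and continuity on $\closure{B}$ are inherited directly from $I$, and the level set $\{I_B \leq \alpha\}$ equals $\closure{B} \cap \{I \leq \alpha + \inf I(B)\}$, which is compact as the intersection of a closed set with a compact level set of the good rate function $I$. The main obstacle throughout is precisely the matching of the infima of $I$ over $B$, $\interior{B}$, and $\closure{B}$: without the density hypothesis $\closure{\interior{B}}=\closure{B}$, or without continuity of $I$, the two one-sided estimates would be controlled by different infima and one would obtain only a weak LDP with mismatched upper and lower rate functions. It is exactly this density/continuity combination that collapses them into the single rate function $I_B$.
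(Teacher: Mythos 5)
Your proposal is correct and follows essentially the same route as the paper's proof: the decomposition $P_n(A\mid B)=P_n(A\cap B)/P_n(B)$ with separate upper and lower estimates on numerator and denominator, the matching of $\inf I$ over $B$, $\interior{B}$, and $\closure{B}$ (and over $G\cap\interior{B}$ versus $G\cap\closure{B}$) via continuity of $I$ together with the density hypothesis $\closure{\interior{B}}=\closure{B}$ --- this is exactly the content of the paper's Lemmas \ref{lem:continuity} and \ref{lem:topological} and Corollary \ref{cor:final} --- and compactness of the level set $\closure{B}\cap\set{x: I(x)\le\alpha+\inf I(B)}$ for goodness. The paper's version differs only in bookkeeping, spelling out the degenerate cases where the relevant infima equal $0$ or $\infty$.
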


In the above notation, $\interior{B}$ is the interior of $B$, while $\closure{B}$ is its closure.  Clearly, $M \times M$ is a Hausdorff space under the relative topology of $\mathbb{R}^d$.  The G\"{a}rtner-Ellis theorem implies that $I_t$ is good and, as it is convex, continuous relative to $M \times M$ (Rockafellar \cite{Rockafellar}, Theorem 10.1).  For the condition $\closure{\interior{B_0}} = \closure{B_0}$ to hold, it suffices for $B_0$ to be convex with a nonempty interior (Rockafellar \cite{Rockafellar}, Theorem 6.3).  Satisfaction of the LDP for $\seq{(G_{n,0},G_{n,t})}$ comes, of course, from the assumed properties of the free energy $\Psi_t$.

To condition on a given value, $a_0$, of the initial observable $G_{n,0}$, we consider a conditional LDP with the conditioning set $B_0 \subset M \times M$ chosen so that the image, $I_t(B_0)$, of $B_0$ under $I_t$ has its infimum at a unique point $(a_0,a_t)$ for some $a_t$.  To this end we choose
\begin{equation}
\label{eqn:conditioning_set}
B_0 = \set{a\in M: \lambda_0 (a-a_0) \ge 0} \times M,
\end{equation}
where $\lambda_0$ satisfies $a_0 = \nabla\Psi(\lambda_0)$ and $\Psi(\cdot) \equiv \Psi_t(\cdot,0)$.  (We assume that $\lambda_0$, so defined, exists and is unique.)  From its definition we see that $B_0$ describes a nonempty, convex half space demarcated by the hyperplane containing $(a_0,a_t)$ with normal vector $(\lambda_0,0)$ directed toward its interior.  We shall assume that $a_0$ is well chosen so that $\interior{B_0}$ is nonempty and hence the conditions of Theorem \ref{thm:Conditional_LDP} are satisfied.  Since $G_{n,0}$ is expected to converge in probability to $a_0$, we shall denote by $G_{n,0} \approx a_0$ the event $(G_{n,0},G_{n,t}) \in B_0$.

In Appendix \ref{app:Conditional_Infimum} it is shown that
\begin{equation}
\inf I_{t}(B_0) = I_t(a_0,a_t) = \lambda_0 a_0 - \Psi_t(\lambda_0,0),
\end{equation}
where $a_t = \nabla_2\Psi_t(\lambda_0,0)$ and $(a_0,a_t)$ is the unique such point.  We therefore conclude that $\left.(G_{n,t_0},G_{n,t})\right|_{B_0}$ satisfies the LDP with the good, effectively continuous, essentially strictly convex rate function $I_{t,B_0}$, where
\begin{equation}
I_{t,B_0}(a_1,a_2) = 
\begin{cases}
I_{t}(a_1,a_2) - \lambda_0 a_0 + \Psi_t(\lambda_0,0) & \text{if $(a_1,a_2) \in \closure{B_0}$}, \\
\infty & \text{otherwise}.
\end{cases}
\end{equation}

The unique equilibrium point of $I_{t,B_0}$ is $(a_0,a_t)$, where $a_0$ specifies $B_0$ and $a_t$ is given by $\nabla_2\Psi_{t}(\lambda_0,0)$.  As the large deviation principle implies convergence in probability, we also have
\begin{equation}
\lim_{n\rightarrow\infty} P_n\left[(G_{n,0},G_{n,t}) \in B \,\right|\left.G_{n,0} \approx a_0\right] = 
\begin{cases}
1 & \text{if $(a_0,a_t) \in \interior{B}$,} \\
0 & \text{if $(a_0,a_t) \not\in \closure{B}$.}
\end{cases}
\end{equation}
Since $\Psi_{t}$ is a convex function which is assumed to be finite and differentiable everywhere, it follows that the gradients and limits as $n\rightarrow\infty$ may be interchanged (Rockafellar \cite{Rockafellar}, Theorem 25.7), allowing us to write
\begin{subequations}
\begin{align}
a_0 &= \nabla_1\Psi_{t}(\lambda_0,0) = \lim_{n\rightarrow\infty} Z_n(\lambda_0)^{-1} \int_{X_n} G_{n} \, \e^{v_n\lambda_0 G_{n}} \; \d P_n, \label{eqn:initial_macrostate} \\
a_t &= \nabla_2\Psi_{t}(\lambda_0,0) = \lim_{n\rightarrow\infty} Z_n(\lambda_0)^{-1} \int_{X_n} G_{n,t} \, \e^{v_n\lambda_0 G_{n}} \; \d P_n, \label{eqn:final_macrostate}
\end{align}
\end{subequations}
where
\begin{equation}
Z_n(\lambda_0) \equiv \int_{X_n} \e^{v_n\lambda_0 G_{n}} \; \d P_n.
\end{equation}
We therefore see that $a_0$ and $a_t$ are given by canonical expectations, where the canonical probability measure is given by
\begin{equation}
\d P_{n,\lambda_0} \equiv Z_n(\lambda_0)^{-1} \e^{v_n \lambda_0 G_n} \; \d P_n
\end{equation}
and $\lambda_0$ is determined from $a_0$.


Using the contraction principle \cite{Dembo_and_Zeitouni} we may obtain the LDP for $\left.G_{n,t}\right|_{a_0}$, the observable at time $t$ conditioned on $G_{n,0} \approx a_0$.  To do this, note that the mapping $(a_1,a_2) \mapsto a_2$ is continuous, and therefore $\left.G_{n,t}\right|_{a_0}$ satisfies the LDP with rate function $I_{t,a_0}$, where
\begin{equation}
\begin{split}
I_{t,a_0}(a) &= \inf \set{I_{t,B_0}(a_1,a_2): a_1 \in M, a_2 = a} \\
&= \inf \set{I_{t}(a_1,a): a_1 \in M, \lambda_0 (a_1-a_0) \ge 0}  - \lambda_0 a_0 + \Psi(\lambda_0) \\
&= I_{t}(a_0,a) - \lambda_0 a_0 + \Psi(\lambda_0),
\end{split}
\end{equation}
for $a \in M$.  From the properties of $I_{t}$, $I_{t,a_0}$ is a good convex rate function.  Since $I_{t,a_0}$ is convex, the corresponding free energy, call it $\Psi_{t,a_0}$, is given by the convex conjugate of $I_{t,a_0}$ \cite{Dinwoodie1993}.  Thus,
\begin{equation}
\begin{split}
\Psi_{t,a_0}(\lambda) &= \sup_{a' \in M} [\lambda a' - I_{t,a_0}(a')] \\
&= \sup_{a' \in M} [\lambda_0 a_0 + \lambda a' - I_{t}(a_0,a') - \Psi(\lambda_0)] \\
&= \Psi_{t}(\lambda_0,\lambda) - \Psi(\lambda_0),
\end{split}
\end{equation}
for $\lambda \in \mathbb{R}^d$.  The properties of finiteness and differentiability for $\Psi_{t,a_0}$ are inherited from those of $\Psi_{t}$, or, more specifically, from those of $\Psi_{t}(\lambda_0,\cdot\,)$.  Furthermore if $\Psi_{t,a_0}$, as a function of complex variables, is analytic at the origin, then $\left.G_{n,t}\right|_{a_0}$ is asymptotically Gaussian with mean $\nabla\Psi_{t,a_0}(0)$ and covariance matrix $\nabla\transpose{\nabla}\Psi_{t,a_0}(0)$ \cite{Bryc1993}.  Irrespective of the analyticity of $\Psi_{t,a_0}$, the global minimum of $I_{t,a_0}$ is attained at $\nabla\Psi_{t,a_0}(0) = \nabla_2\Psi_{t}(\lambda_0,0) = a_t$, as expected.  This may be seen by noting that $I_{t,a_0}$ may be expressed in terms of $\Psi_{t,a_0}$ via the Legendre transform relation
\begin{equation}
I_{t,a_0}(a) = \lambda a - \Psi_{t,a_0}(\lambda), \quad \text{where $a = \nabla\Psi_{t,a_0}(\lambda)$}.
\end{equation}

It should be noted that the above results are identical to that obtained by more traditional information theoretic methods for nonequilibrium phenomena \cite{Jaynes1980}.  This should not be surprising, since both methods rely in essence upon an entropy maximization procedure.  The crucial difference and importance of the large deviation approach discussed here is that the maximum entropy principle is now given a rigorous, statistical justification; it need not be taken as a mere \textit{ad hoc} assumption, however reasonable.  The key difference is that, in information theory, constraints are placed based upon the ``known'' expectation values, whereas in the theory of large deviations the constraints are imposed upon the actual macroscopic observables.  The equivalence of these constraints, an implicit assumption of information theory, is explicitly proven through large deviation theory.


\section{Macroscopic Dynamics}
\label{sec:macroscopic_dynamics}

Using Eqs.\ (\ref{eqn:initial_macrostate}) and (\ref{eqn:final_macrostate}) one may define a map $\psi_t$ from a given initial macrostate $a_0$ at time zero to a final, predicted macrostate $\psi_t(a_0) = a_t$ at time $t\in\mathbb{R}$.  The formal definition of this map requires only that the corresponding integral expressions are well defined and that the relation between $a_0$ and its conjugate, $\lambda_0$, be invertible.  The LDP, however, serves to connect this predicted macrostate to a typical realization of the time-dependent observable $G_{n,t}$ by stating that the latter converges in probability to the former when conditioned on $G_{n,0} \approx a_0$.  It is in this sense that the map $\psi_t$ or, more precisely, the family of maps $\set{\psi_t}_{t\in\mathbb{R}}$ represents the macroscopic dynamics of the system.

For a given $a_0$, the curve $\set{\psi_t(a_0)}_{t\in\mathbb{R}}$ represents a set of most-probable macrostates which we have referred to in \cite{LaCour2000} as the deterministic curve.  Thus, for each given time $t$ the most probable macrostate at this time is $\psi_t(a_0)$.  This deterministic curve should be distinguished, however, from the actual trajectory, $\set{G_{n,t}(x_0)}_{t\in\mathbb{R}}$, of a particular realization for finite $n$ with initial microstate $x_0$.  As we have pointed out previously \cite{LaCour2000}, despite their similarities there may be striking qualitative differences between $\set{\psi_t(a_0)}_{t\in\mathbb{R}}$ and $\set{G_{n,t}(x_0)}_{t\in\mathbb{R}}$ owing, e.g., to the presence of Poincar\'{e} recurrences.

In this section we shall consider some properties of the macroscopic map, $\psi_t$, as a map on the set of macrostates and the deterministic curve, $\set{\psi_t(a_0)}_{t\in\mathbb{R}}$, as a function of time.


\subsection{Group/Semigroup Property}

The group property of the microscopic dynamics, $\set{\Phi_{n,t}}_{t\in\mathbb{R}}$, does not necessarily imply a group nor even a semigroup property for the macroscopic dynamics, $\set{\psi_t}_{t\in\mathbb{R}}$.  This was previously found to be true in noninteracting systems \cite{LaCour2000}; we discuss its validity in the broader class of interacting systems.  To begin, it is important to distinguish conceptually between a quantity such as $\psi_{t+s}(a_0)$ on the one hand and $\psi_s(\psi_t(a_0))$ on the other.  The former is the expected macrostate at time $t+s$ given that the macrostate at time zero is $a_0$.  By contrast, the latter is the expected macrostate at time $t+s$ given only that the macrostate at time $t$ is $\psi_t(a_0)$, which in turn is the predicted macrostate at time $t$ given only that the macrostate at time zero is $a_0$.  While it is true that the macroscopic dynamics are time translation invariant, i.e., conditioning $G_{n,t+s}$ on $G_{n,t}$ is the same as conditioning $G_{n,s}$ on $G_{n,0}$, it need not be the case that $\psi_s(\psi_t(a_0)) = \psi_{t+s}(a_0)$.  The latter equality is, of course, equivalent to a group or semigroup property, neither of which need hold.

There are several properties of the deterministic curve which, if they hold, preclude the formation of a group.  For example, suppose $\psi_t(a_0) = a_*$ for some $a_0 \neq a_*$.  If the group property were to hold, then invariance of the \textit{a priori} measure would imply that $a_0 = \psi_{-t}(\psi_t(a_0)) = \psi_{-t}(a_*) = a_*$, which is a contradiction.  Time symmetry offers a similar obstacle since, supposing again the group property, the fact that $\psi_t = \psi_{-t} = \psi_t^{-1}$ for all $t\in\mathbb{R}$ would imply that $\psi_t = \psi_{t/2}\circ\psi_{t/2} = \psi_{t/2}\circ\psi_{t/2}^{-1}$ is the identity.  Furthermore, if the semigroup property does holds it will imply a monotonic convergence to equilibrium if $\psi_t(a_0)$ is never further from equilibrium that it is initially.  This may be seen by noting that $\abs{\psi_{t+s}(a_0)-a_*} \le \abs{a_0-a_*}$ and $\abs{\psi_s(\psi_t(a_0))-a_*} \le \abs{\psi_t(a_0)-a_*}$ would imply $\abs{\psi_{t+s}(a_0)-a_*} \le \abs{\psi_t(a_0)-a_*}$.

A useful necessary and sufficient condition for the semigroup property to hold may be obtained as follows:  First, we make the trivial observation that $\psi_{t+s}(a_0) = \psi_t(\psi_s(a_0))$ iff $\psi_{t+s}(a_0)\psi_s(a_0) = \psi_t(\psi_s(a_0))\psi_s(a_0)$.  Now, $\psi_s(a_0)$ may be interpreted in one of two ways.  As the predicted macrostate at time $s$ based on conditioning to the value $a_0$ at time zero, it obeys
\begin{equation}
\psi_s(a_0) = \lim_{n\rightarrow\infty} \int_{X_n} G_{n,s} \; \d P_{n,\lambda_0},
\end{equation}
where $\lambda_0$ corresponds to $a_0$ via Eq.\ (\ref{eqn:initial_macrostate}).  Considered as the initial condition in $\psi_t(\psi_s(a_0))$, however, $\psi_s(a_0)$ satisfies
\begin{equation}
\psi_s(a_0) = \lim_{n\rightarrow\infty} \int_{X_n} G_{n} \; \d P_{n,\lambda_s},
\end{equation}
which relates $\psi_s(a_0)$ to its conjugate macrostate, $\lambda_s$.  The semigroup property will therefore hold iff
\begin{multline}
\frac{\psi_{t+s}(a_0)\psi_s(a_0)}{\psi_t(\psi_s(a_0))\psi_s(a_0)} \\
= \lim_{n\rightarrow\infty} \frac{\int_{X_n}G_{n,t+s}\,\e^{n\lambda_0G_n}\;\d P_n \;\cdot\; \int_{X_n}G_{n}\,\e^{n\lambda_s G_n}\;\d P_n}{\int_{X_n}G_{n,t}\,\e^{n\lambda_sG_n}\;\d P_n \;\cdot\; \int_{X_n}G_{n,s}\,\e^{n\lambda_0G_n}\;\d P_n} = 1
\label{eqn:semigroup_condition}
\end{multline}
for all $\lambda_0$ and $0 \le s \le t$ (or $t \le s \le 0$).


\subsection{Time Symmetry}

From the group property, the dynamical map $\Phi_{n,t}$ is not only invertible but also time reversible; i.e., $\Phi_{n,t}^{-1} = \Phi_{n,-t}$.  Suppose furthermore that the dynamics are time reversal invariant; i.e., $\Phi_{n,-t} = R_n\circ\Phi_{n,t}\circ R_n$ for some involution $R_n$.  (For Hamiltonian systems, e.g, $R_n$ corresponds to momentum reversal.)  When this is the case, every microstate $x \in X_n$ has a mirror state $R_n(x)$ such that $\Phi_{n,t}(x) = R_n(\Phi_{n,-t}(R_n(x)))$; hence, the trajectory of $x$ is mirrored by the trajectory of $R_n(x)$ with the direction of time reversed.  Suppose the set $\set{x\in X_n: R_n(x)=x}$ forms a boundary separating two disjoint regions of microstates and their mirror states.  If the observable, $G_n$, and the \textit{a priori} measure, $P_n$, are both invariant under $R_n$, then a typical sample of initial microstates such that $G_{n,0} \approx a_0$ will include roughly equal numbers of points from these two regions.  On this basis, one expects the trajectories $\set{G_{n,t}}_{t\le0}$ and $\set{G_{n,t}}_{t\ge0}$ to be \emph{approximately} symmetric when $n$ is large.  This suggests that the expected trajectories $\set{\psi_t(a_0)}_{t\le0}$ and $\set{\psi_t(a_0)}_{t\ge0}$ should be \emph{perfectly} symmetric in time.  That this is indeed the case may be seen by noting the following:
\begin{equation*}
\begin{split}
\psi_{t}(a_0) &= \lim_{n\rightarrow\infty} Z_n(\lambda_0)^{-1} \int_{X_n}(G_n\circ\Phi_{n,t}) \, \e^{n\lambda_0G_n}\,\d P_n \\
&= \lim_{n\rightarrow\infty} Z_n(\lambda_0)^{-1} \int_{X_n}(G_n\circ R_n\circ\Phi_{n,t}) \, \e^{n\lambda_0G_n\circ R_n}\,\d P_n \\
&= \lim_{n\rightarrow\infty} Z_n(\lambda_0)^{-1} \int_{X_n}(G_n\circ\Phi_{n,-t}\circ R_n) \, \e^{n\lambda_0G_n\circ R_n}\,\d P_n \\
&= \lim_{n\rightarrow\infty} Z_n(\lambda_0)^{-1} \int_{X_n}(G_n\circ\Phi_{n,-t}) \, \e^{n\lambda_0G_n}\,\d (P_n\circ R_n^{-1}) \\
&= \psi_{-t}(a_0).
\end{split}
\end{equation*}

A common example is that of a Hamiltonian system with an observable which depends only on the positions of the microscopic constituents and not on their momenta.  It is important to note that the symmetry is with respect to the point in time at which the initial condition is imposed.  As noted above, this symmetry precludes a group property.  If a semigroup property holds, however, then there are in fact two semigroups, which are identical in all but the orientation of the time axis.


\subsection{Continuity and Equilibration}

If $G_n$ is a discontinuous function, then specific realizations of $\set{G_{n,t}}_{t\in\mathbb{R}}$ will also be discontinuous in $t$.  However, as $\psi_t(a_0)$ is given by an limiting integral expression, we would expect that $\psi_t(a_0)$ is continuous in $t$, provided $G_{n,t}$ is almost-everywhere continuous in $t$, by Lebesgue dominated convergence.  The difficulty, of course, is in the interchange of limits $t'\rightarrow t$ and $n\rightarrow\infty$.  If, however, $\int_{X_n} G_{n,t} \; \d P_{n,\lambda_0}$ converges \emph{uniformly} in $t$ as $n\rightarrow\infty$, then we will have the desired continuity result.

A similar problem is encountered in considering the long-time behavior of $\psi_t(a_0)$.  If $\Phi_{n,t}$ is, say, mixing with an asymptotic measure equal to the invariant \textit{a priori} measure $P_n$, then
\begin{equation}
\lim_{n\rightarrow\infty}\lim_{t\rightarrow\infty} \int_{X^n}G_{n,t}\,\d P_{n,\lambda_0} = \lim_{n\rightarrow\infty}\int_{X^n}G_n\,\d P_n \equiv a_*,
\end{equation}
but
\begin{equation}
\lim_{t\rightarrow\infty} \lim_{n\rightarrow\infty} \int_{X^n}G_{n,t}\,\d P_{n,\lambda_0} = \lim_{t\rightarrow\infty} \psi_t(a_0).
\end{equation}
Again, uniform convergence permits the interchange of these limits.  In this case, we say that the initial macrostate, $a_0$, \emph{equilibrates} to the \textit{a priori} most likely or equilibrium value, $a_*$, since $\psi_t(a_0) \rightarrow a_*$ as $t \rightarrow \infty$.  The mixing property implies that this convergence, if it holds, is independent of $a_0$.

Although long time convergence has been occasionally described as ``time asymmetric'' behavior (because of the difference in macrostates between $t=0$ and $t=\infty$), in truth the convergence may occur in both the forward and reverse time directions.  This will be true, for example, if the macroscopic map is time symmetric.  Furthermore, this convergence need not be monotonic.  An exception, however, is when the family of macroscopic maps forms a semigroup.


\subsection{Affine Covariance}

We end with a brief discussion of affine covariance.  Given the observables $G_{n,0}$ and $G_{n,t}$ which satisfy the regularity conditions for $\Psi_t$, consider the affine transformed observable
\begin{equation}
G_n' \equiv A G_n + b,
\end{equation}
where $A$ is a nonsingular matrix and $b$ is a column vector.  The dynamic free energy for the transformed pair is
\begin{equation}
\begin{split}
\Psi_{t}'(\lambda_1',\lambda_2') &\equiv \lim_{n\rightarrow\infty} v_n^{-1} \log \int_{X_n} \e^{v_n[\lambda_1' G_{n,0}' + \lambda_2' G_{n,t}']} \; \d P_n \\
&= \Psi_{t}(\lambda_1 A', \lambda_2' A) + {\lambda_1' b} + {\lambda_2' b}.
\end{split}
\end{equation}
Clearly $\Psi_{t}'$ inherits all of the differentiability properties of $\Psi_{t}$.  Conditioning on $G_{n,0}' = a_0'$, we therefore obtain the following set of equations:
\begin{subequations}
\begin{align}
a_0' &= \nabla_1'\Psi_{t}'(\lambda_0',0) = A \nabla_1\Psi_{t}(\lambda_0' A,0) + b, \\
a_t' &= \nabla_2'\Psi_{t}'(\lambda_0',0) = A \nabla_2\Psi_{t}(\lambda_0' A,0) + b.
\end{align}
\end{subequations}
Letting $\lambda_0 = \lambda_0'A$ in Eq.\ (\ref{eqn:final_macrostate}) gives $\psi_{t}(a_0) = \nabla_2\Psi_{t}(\lambda_0'A,0)$, where $a_0 = \nabla_1\Psi_{t}(\lambda_0'A,0)$ by Eq.\ (\ref{eqn:initial_macrostate}).  Since $A$ is invertible, $a_0 = A^{-1}(a_0'-b)$ and the transformed macroscopic map is
\begin{equation}
\psi_{t}'(a_0') = A \psi_{t}(A^{-1}(a_0'-b)) + b.
\label{eqn:affine_covariance}
\end{equation}
Thus, for any invertible affine transformation $T$, $\psi_{t}' = T\circ\psi_{t}\circ T^{-1}$.

Clearly continuity, time symmetry, and convergence are affine invariant properties of the macroscopic map.  The semigroup property is also affine invariant since, if $T$ is any invertible affine transformation,
\begin{equation}
\begin{split}
\psi_{t+s}' &= T\circ\psi_{t+s}\circ T^{-1} = T\circ(\psi_t\circ\psi_s)\circ T^{-1} \\
&= (T\circ\psi_t\circ T^{-1})\circ(T\circ\psi_s\circ T^{-1}) = \psi_t'\circ\psi_s'.
\end{split}
\end{equation}


\section{Extensive Noninteracting Systems}
\label{sec:extensive_noninteracting}

Previously in \cite{LaCour2000} we considered systems of $n$ dynamically independent and identical constituents with extensive observables of the form
\begin{equation}
G_n(x_1,\ldots,x_n) = \frac{1}{n}\sum_{i=1}^{n}g(x_i),
\end{equation}
where $(x_1,\ldots,x_n) \in X^n = X_n$ and $g$ is a bounded, measurable function from $X$ to $M$.  (The former restrictions that $g$ be almost-everywhere continuous and scalar valued are here lifted.)  The microscopic dynamics are given by
\begin{equation}
\Phi_{n,t}(x_1,\ldots,x_n) = \left(\,\varphi_t(x_1),\ldots,\varphi_t(x_n)\,\right),
\end{equation}
where $\varphi_t$ is a measurable transformation on $X$.  The \textit{a priori} probability measure is the product measure $P_n = P^n$, so all constituents are considered to be \textit{a priori} statistically independent and identical.  In this section, the results from \cite{LaCour2000} will be rederived using the present approach and, due to the lessening of restrictions on $g$, generalized.

Consider first the dynamic free energy.  Taking $v_n = n$, this is given by
\begin{equation}
\begin{split}
\Psi_t(\lambda_1,\lambda_2) &= \lim_{n\rightarrow\infty} \frac{1}{n}\log\int_{X^n}\e^{\lambda_1\sum_i g(x_i)+\lambda_2\sum_j g(\varphi_t(x_j))} \; \d P^n(x_1,\ldots,x_n) \\
&= \log \int_{X} \e^{\lambda_1 g(x) + \lambda_2 g(\varphi_t(x))} \; \d P(x).
\end{split}
\end{equation}
Since the limit clearly converges, $\Psi_{n,t}$ is indeed well defined.  Boundedness of $g$ implies $\Psi_{n,t}$ is everywhere finite, since
\begin{equation}
\begin{split}
|\e^{\Psi_t(\lambda_1,\lambda_2)}| &\le \int_{X} \abs{\e^{\lambda_1 g(x) + \lambda_2 g(\varphi_t(x))}} \; \d P(x) \\
&\le \int_{X} \e^{\lambda_1 \norm{g}_{\infty} + \lambda_2 \norm{g}_{\infty}} \; \d P(x) < \infty.
\end{split}
\end{equation}
Finally, Lebesgue dominated convergence implies $\Psi_t$ is everywhere differentiable.  Thus, $(G_{n,0},G_{n,t})$ satisfies the LDP with the good, convex rate function
\begin{equation}
I_t(a_1,a_2) = \lambda_1a_1 + \lambda_2a_2 - \Psi_t(\lambda_1,\lambda_2),
\end{equation}
where $\lambda_1$ and $\lambda_2$ are the conjugate macrostates corresponding to $a_1$ and $a_2$, respectively.  (In this case, the LDP follows more directly from an application of Cram\'{e}r's theorem \cite{Dembo_and_Zeitouni}.)

Conditioning on $G_{n,0} \approx a_0$ leads to a conditional LDP, as described in Sec.\ \ref{sec:ldp_approach}, with asymptotically most likely values for $G_{n,0}$ and $G_{n,t}$ of $a_0$ and $\psi_t(a_0)$.  From Eqs.\ (\ref{eqn:initial_macrostate}) and (\ref{eqn:final_macrostate}), we find by differentiating $\Psi_t$ that
\begin{subequations}
\begin{align}
a_0 &= \int_X g(x)\,\e^{\lambda_0 g(x)} \; \d P(x) / \int_{X} \e^{\lambda_0 g(x)} \; \d P(x) \\
\psi_t(a_0) &= \int_X g(\varphi_t(x))\,\e^{\lambda_0 g(x)} \; \d P(x) / \int_{X} \e^{\lambda_0 g(x)} \; \d P(x).
\end{align}
\end{subequations}
This recovers the results from \cite{LaCour2000} with the added generalization that $g$ need not be almost-everywhere continuous nor scalar in value.

Invertibility of the transformation $\lambda_0 \mapsto a_0 = \nabla\Psi(\lambda_0)$ may be determined by considering the Jacobian matrix $J = \nabla\transpose{\nabla}\Psi$, where
\begin{equation}
J(\lambda_0) = \frac{\int g\transpose{g}\e^{\lambda_0 g}\;\d P}{\int \e^{\lambda_0 g}\;\d P} - \frac{\int g\,\e^{\lambda_0 g}\d P \int \transpose{g}\e^{\lambda_0 g}\d P}{\left(\int \e^{\lambda_0 g}\;\d P\right)^2}.
\end{equation}
Thus, $J(\lambda_0)$ is just the covariance matrix of $g$ with respect to the canonical distribution corresponding to $\lambda_0$.  If $g = c$ is almost everywhere a constant, then $J$ will be identically zero and the transformation is clearly not invertible.  On the other hand, suppose $g = 1_{C}$, where $1_{C}$ is the indicator function; i.e., $1_{C}(x) = 1$ if $x\in C$ and $1_{C}(x) = 0$ otherwise.  In this case,
\begin{equation}
J(\lambda_0) = \frac{\e^{\lambda_0}P(C)[1-P(C)]}{\left[1-P(C)+\e^{\lambda_0}P(C)\right]^2},
\end{equation}
which is zero if and only if $P(C)$ is zero or one, which in either case implies $g$ is constant almost everywhere.  A nonzero Jacobian then entails that the transformation is invertible.


\section{Interacting Binary-State Lattice}
\label{sec:binary_lattice}

We now turn to an example of a simple interacting system.  Consider a one-dimensional lattice $s = (s_0,\ldots,s_{n-1})$ of $n$ sites, each site having a ``spin'' value 1 or 0 (``up'' or ``down'').  Such a system was the subject of an early study by Wolfram \cite{Wolfram1983} on cellular automata and provides a crude model of interacting magnetic spins.  (Sutner \cite{Sutner1988} has considered generalizations of this problem using graph-theoretic methods.)  We shall assume all sites are dynamically identical, which implies periodic boundary conditions.  For nearest-neighbor interactions, the discrete-time dynamics of the lattice are governed by a map $\phi$, where $s \mapsto \Phi_{n,t}(s) = \Phi_{n,1}^t(s)$ and
\begin{equation}
\pi_j(\Phi_{n,1}(s)) = \phi(s_{[j-1]_n},\,s_j,\,s_{[j+1]_n})
\end{equation}
for $j=0,\ldots,n-1$.  Here, $\pi_j$ is the canonical projection and $[\,\cdot\,]_n$ indicates the modulo $n$ function, whose use imposes periodic boundary conditions.  The \textit{a priori} distribution of the spins is taken to be independent and uniform.  Thus, each possible initial lattice has probability $1/2^n$.

For the dynamics we shall use Wolfram's ``rule 90,'' in which the state of a site at the next time step is the sum (modulo 2) of its two neighbors.  Symbolically, this means
\begin{equation}
\phi(s_0,s_1,s_2) = [s_0+s_2]_2^{}.
\end{equation}
In this map, one neighbor up and the other down leads to an up state, while two neighboring down states lead to a down state.  This rule does not correspond well to a ferromagnet, though, since two neighboring up states also lead to a down state.  For this reason, the map $\Phi_{n,t}$ is neither invertible nor measure preserving.  (Note, e.g., that both $(0,\ldots,0)$ and $(1,\ldots,1)$ map to $(0,\ldots,0)$.)  This model should therefore not be viewed as a dynamic Ising model but merely as a simple, yet interesting, cellular automaton in its own right.

Taking the observable to be the mean spin or ``magnetization,'' the dynamic free energy is
\begin{equation}
\Psi_t(\lambda_1,\lambda_2) = \lim_{n\rightarrow\infty} \frac{1}{n} \log \left[ \frac{1}{2^n}\sum_{s_0}\cdots\sum_{s_{n-1}} \e^{\lambda_1\sum_i s_i + \lambda_2\sum_j \pi_j(\Phi_{n,t}(s_0,\ldots,s_{n-1}))}\right].
\end{equation}
The finite-$n$ dynamic free energy, $\Psi_{n,t}$, is shown in Fig.\ \ref{fig:binary-free} for $t=1$ and $n=10$.  As $n$ is finite, this is only an approximation to the true dynamic free energy, $\Psi_t(\lambda_1,\lambda_2)$, but the convergence appears to be fairly rapid.  The surface is smooth and regular, supporting our assumption that $\Psi_t$ is everywhere finite and differentiable.  This allows us to identify $\psi_t(a_0) = \nabla_2\Psi_t(\lambda_0,0)$, where $a_0 = \nabla_1\Psi_t(\lambda_0,0)$, as the asymptotically most probable value of $G_{n,t}$ conditioned on $G_n \approx a_0$ for large $n$.  The graph of $\Psi_{n,t}$ for other values of $t$ is qualitatively quite similar.

Given an initial magnetization of $a_0$, we find that the corresponding conjugate macrostate, $\lambda_0$, is given by
\begin{equation}
\begin{split}
a_0 &= \lim_{n\rightarrow\infty} \sum_{s_0}\cdots\sum_{s_{n-1}} {\frac{1}{n} \sum_j s_j \, \e^{\lambda_0\sum_i s_i}} / \sum_{s_0}\cdots\sum_{s_{n-1}} \e^{\lambda_0\sum_i s_i} \\
&= \sum_{s_0}s_0\,\e^{\lambda_0 s_0} / \sum_{s_0}\e^{\lambda_0 s_0} = \frac{\e^{\lambda_0}}{1+\e^{\lambda_0}},
\end{split}
\end{equation}
or $\lambda_0 = \log[a_0/(1-a_0)]$.  Not surprisingly, this is the same result obtained in \cite{LaCour2000} for fractional occupations.

The macroscopic map itself is given formally by
\begin{equation}
\psi_t(a_0) = \lim_{n\rightarrow\infty} \frac{1}{n} \sum_{s_0}\cdots\sum_{s_{n-1}} \sum_j\pi_j\left(\Phi_{n,t}(s_0,\ldots,s_{n-1})\right) \prod_{i=0}^{n-1}\frac{\e^{\lambda_0s_i}}{1+\e^{\lambda_0}}.
\end{equation}
Since all sites are dynamically identical, $\sigma_j\circ\Phi_{n,t} = \Phi_{n,t}\circ\sigma_j$, where $\sigma_j$ is the shift map defined by $\pi_j(\sigma_j(s)) = s_{[i+j]_2}$.  In other words, the interactions are translation invariant and the lattice boundary conditions are periodic --- rotating the arguments is equivalent to rotating the site labels.  Due to nearest neighbor interactions, the state of any given lattice site $j$ at time $t$ will depend upon itself and its $2t$ neighboring sites.  Thus, $\pi_j\circ\Phi_{n,t}$ may be written
\begin{equation}
\pi_j\circ\Phi_{n,t} = \varphi_t\circ\pi_{0,\ldots,2t}\circ\sigma_{j-t},
\end{equation}
where $\pi_{0,\ldots,2t}$ is the canonical projection from $\set{0,1}^n$ to $\set{0,1}^{2t+1}$ (assuming $n \ge 2t+1$) and $\varphi_t(s_0,\ldots,s_{2t})$ is the state of site $t$ at time $t$ given that the initial states of sites $0$ through $2t$ are $s_0,\ldots,s_{2t}$, respectively.  We therefore have
\begin{equation}
\psi_t(a_0) = \lim_{n\rightarrow\infty} \frac{1}{n}\sum_j \sum_{s_0}\cdots\sum_{s_{n-1}} (\varphi_t\circ\pi_{0,\ldots,2t}\circ\sigma_{j-t}) \prod_{i=0}^{n-1}\frac{\e^{\lambda_0s_i}}{1+\e^{\lambda_0}}.
\end{equation}
By reordering the summation variables, the shift map $\sigma_{j-t}$ may be transferred to the product, which thereby remains unchanged.  Since only $2t+1$ sites are involved in the sum, we find
\begin{equation}
\begin{split}
\psi_t(a_0) &= \sum_{s_0}\cdots\sum_{s_{2t}} \varphi_t(s_0,\ldots,s_{2t})\,\prod_{i=0}^{2t} \frac{\e^{\lambda_0 s_i}}{1+\e^{\lambda_0}} \\
&= \sum_{s_0}\cdots\sum_{s_{2t}} \varphi_t(s_0,\ldots,s_{2t})\,\prod_{i=0}^{2t}a_0^{s_i}(1-a_0)^{1-s_i}.
\end{split}
\end{equation}
Finally, $\varphi_t$ may be defined recursively as follows: $\varphi_0(s_0) = s_0$ and
\begin{equation}
\varphi_t(s_0,\ldots,s_{2t}) = \left[\varphi_{t-1}(s_0,\ldots,s_{2t-2}) + \varphi_{t-1}(s_2,\ldots,s_{2t}) \right]_2.
\end{equation}

Wolfram \cite{Wolfram1983} notes that, if $s_{t}$ is the only nonzero site, the lattice at time $t\le (n-1)/2$ is given by the binomial coefficients, modulo 2.  Thus,
\begin{equation}
\Phi_{n,t}(s) = \left(\;s_0\left[\binom{t}{0}\right]_2\!\!,\, 0,\, s_2\left[\binom{t}{1}\right]_2\!\!,\, 0,\, \ldots,\, s_{2t}\left[\binom{t}{t}\right]_2\!\!,\, 0,\, \ldots,\, 0\;\right).
\end{equation}
An analogous rule holds if any other $s_i$ is the only nonzero site.  For an arbitrary initial lattice, Wolfram notes that the final lattice is given by the additive superposition, modulo 2, of these $n$ basic lattice states.  From this ``additive superposition'' property we find
\begin{equation}
\varphi_t(s_0,\ldots,s_{2t}) = \left[ \sum_{j=0}^{t}s_{2j}\left[\binom{t}{j}\right]_2 \right]_2
\end{equation}
for an arbitrary $s$.  Separating even and odd indices, the macroscopic map is found to be given by
\begin{equation}
\begin{split}
\psi_t(a_0) &= \sum_{s_0}\cdots\sum_{s_{2t}} \left[ \sum_{j=0}^{t}s_{2j}\left[\binom{t}{j}\right]_2 \right]_2 \sum_{s_1}\cdots\sum_{s_{2t-1}}\prod_{i=0}^{2t}a_0^{s_i}(1-a_0)^{1-s_i} \\
&= \sum_{s_0}\sum_{s_1}\cdots\sum_{s_t} \left[ \sum_{j=0}^{t}s_{j}\left[\binom{t}{j}\right]_2 \right]_2 \prod_{i=0}^{t}a_0^{s_{i}}(1-a_0)^{1-s_{i}}.
\end{split}
\label{eqn:spiffy}
\end{equation}

In Fig.\ \ref{fig:binary} we have plotted the macroscopic map $\psi_t(a_0)$ for $a_0 = 0.1$ and $t = 0,\ldots,16$ along with the observable $G_{n,t}$ for $n = 10,000$.  To generate the initial lattice of $n$ sites with magnetization $a_0$, the first $\floor{na_0}$ sites were assigned a value of 1, while the remaining sites were assigned 0.  The sites were then randomly relabeled to eliminate site-dependent correlations.  This gave the desired conditional distribution, and $G_{n,t}$ was determined by time-evolving this initial lattice.  Both $\psi_t(a_0)$ and $G_{n,t}$ are in good agreement, corroborating our assumption that the free energy is both well defined and differentiable.

It may be noted that the graph of $\psi_t(a_0)$ repeats its minimum value at each $t=2^k$, where $k=0,1,2,\ldots$.  This becomes clearer when we view the long-time behavior of $G_{n,t}$, as shown in Fig.\ \ref{fig:binary-long}.  In addition to this repetition of the minimum value, there are higher, more closely spaced bands of increasing density, with the greatest density appearing near $0.5$.  Larger values of $n$ give bands of narrower vertical width, suggesting that $\psi_t(a_0)$ attains values only within each band.  That this is indeed the case is shown below, and an explicit formula for $\psi_t(a_0)$ giving the explicit value of the $m^{\text{th}}$ band will be derived.  The final result is given in Eq.\ (\ref{eqn:nifty}) below.

The first (lowest) band may be understood by considering the formula for $\psi_t(a_0)$ given in Eq.\ (\ref{eqn:spiffy}).  Notice that for $t=2^k$ the binomial coefficient $\binom{t}{j} = \binom{2^k}{j}$ is an even number for all $0 < j < t$.  This means that the sum (modulo 2) over $s_j[\binom{t}{j}]_2$ can have nonzero contributions only from the first ($j=0$) and last ($j=t$) terms.  This leaves four possible values for the pair $s_0, s_t$, two of which give zero contributions.  This gives
\begin{equation*}
\begin{split}
\psi_t(a_0) &= \sum_{s_0}\cdots\sum_{s_t}\left[ s_0\left[\binom{t}{0}\right]_2 + s_t\left[\binom{t}{t}\right]_2 \right]_2 \prod_{i=0}^{t}a_0^{s_i}(1-a_0)^{1-s_i} \\
&= \sum_{s_0}\sum_{s_t} \left[ s_0 + s_t \right]_2 a_0^{s_0}(1-a_0)^{1-s_0} a_0^{s_t}(1-a_0)^{1-s_t} \\
&= 2a_0(1-a_0).
\end{split}
\end{equation*}
Comparison with Fig.\ \ref{fig:binary} shows that, for $a_0 = 0.1$, the first band does indeed occur at the value $2a_0(1-a_0) = 0.18$.

In a similar manner, values in the second band appear to occur at times $2^k+\set{1, 2, 4, \ldots, 2^{k-1}}$.  More generally, values on the $m^{\text{th}}$ band occur at times $t = 2^{k_1}+2^{k_2}+\cdots+2^{k_m}$, where $0 \le k_m < \ldots < k_1$.  Now, any $t$ may be written in this form, so $m$ is really just the sum of digits in the binary expansion of $t$.  Written in this way, we observe that $\binom{t}{j}$ will be odd only for the values
\begin{multline}
j,\; n-j = 0, 2^{k_2}, \set{2^{k_3}, 2^{k_2}+2^{k_3}}, \\
\set{2^{k_4},2^{k_2}+2^{k_4}, 2^{k_3}+2^{k_4},2^{k_2}+2^{k_3}+2^{k_4}}, \ldots
\end{multline}
Consequently, the sum $\sum_{j=0}^{t}s_{j}[\binom{t}{j}]_2$ will contain at most $2\times2^{m-1} = 2^m$ nonzero terms.  (This fact was also noted by Wolfram in connection to his study of cellular automata using rule 90.  See \cite{Wolfram1984} and references within.)  Reordering the summation indices, we may therefore write
\begin{equation*}
\psi_t(a_0) = \sum_{s_0}\cdots\sum_{s_{2^m-1}} \left[ \sum_{j=0}^{2^m-1}s_{j} \right]_2 \prod_{i=0}^{2^m-1}a_0^{s_{i}}(1-a_0)^{1-s_{i}}.
\end{equation*}
For each choice of $s_0,\ldots, s_{2^m-1}$, only an odd number of nonzero spins will contribute, since we take the innermost sum modulo 2.  There are $\binom{2^m}{2\nu-1}$ choices with exactly $2\nu-1$ nonzero spins, taking $\nu = 1,\ldots,2^{m-1}$, and for each such choice a value $a_0^{2\nu-1}(1-a_0)^{2^m-2\nu+1}$ is contributed to the sum.  Summing over all choices, we thereby obtain the surprisingly compact result
\begin{equation}
\psi_t(a_0) = \sum_{\nu=1}^{2^{m-1}} \binom{2^m}{2\nu-1} a_0^{2\nu-1}(1-a_0)^{2^m-2\nu+1},
\label{eqn:nifty}
\end{equation}
where $m$, as noted earlier, is the sum of digits in the binary expansion of $t$. (For $t=0$ the formula is not valid, but then $\psi_0$ is just the identity.)  Thus, we have obtained the exact macroscopic solution for an observable, here the ``magnetization,'' from the exact microscopic dynamics of an interacting system.


\section{Discussion}
\label{sec:discussion}

In this work we have considered the prediction of a final macroscopic state from its given initial value in the case that the observable may be vector valued and the underlying microscopic dynamics may include interactions amongst the constituents.  The approach we have used borrows well-known techniques from large deviation theory to establish a level-1 LDP for the joint initial and final macrostates by examining regularity conditions of a generalized, dynamic free energy function.  From this, a conditional LDP was deduced which gave the desired convergence in probability as well as an explicit expression for the predicted final macrostate in terms of a canonical expectation.

The dynamic free energy was defined in terms of a given macroscopic observable with an associated scaling parameter, the underlying microscopic dynamics of the system, and an \textit{a priori} probability distribution on the space of microstates.  A key assumption made was that the dynamic free energy is everywhere well defined, finite, and differentiable.  Additionally, we assumed a one-to-one correspondence between macrostates and the conjugate macrostates arising from the Legendre transform of the free energy.  No claim has been made that these properties should hold universally, though we submit that they are reasonable.

One of the goals of this endeavor was to treat macroscopic dynamics on an equal footing with dynamics on the microscopic level.  To this end, we have considered several properties of the macroscopic dynamics, as given by the set of predicted macrostates, which may be inherited from the underlying microscopic description.  We have found that in general the group and semigroup properties are not inherited, though semigroups are known to be possible.  Time reversal invariance at the microscopic level was found to imply time symmetry at the macroscopic level if the observable and \textit{a priori} measure bear the proper symmetries.  Equilibration, the tendency for the predicted macrostate to tend to a limiting value in the long-time limit, was found to hold under mixing dynamics; however, the rate of convergence must not depend greatly upon the size of the system.  Finally, linear or, more generally, affine transformations of the macroscopic variables where found to preserve macroscopic determinism, with the macroscopic map undergoing a corresponding covariant transformation.

Results for noninteracting systems, treated previously by different techniques, were rederived and extended to include vector-valued macrostates.  This case also provided an example in which all regularity conditions could be verified rigorously and completely.  Interacting systems are, of course, more difficult to analyze.  We therefore considered a simple example consisting of a binary-state lattice with nearest-neighbor interactions.  An explicit expression for the macroscopic map was derived, whose predictions agreed well with the results of numerical simulations, but the regularity conditions for the dynamic free energy could only be verified numerically.  Due to its analytic intractability, there remains a need for methods of verifying the regularity conditions of the dynamic free energy without explicit computation.

\appendix

\section{Proof of Conditional LDP}
\label{app:Conditional_LDP}

To prove Theorem \ref{thm:Conditional_LDP}, we shall first require the following results.


\begin{lemma}
\label{lem:continuity}
If $I$ is a good, finite, continuous rate function, then $\inf I(A) = \inf I(\closure{A})$ for any $A$.
\end{lemma}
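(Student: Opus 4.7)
The plan is to establish the two inequalities $\inf I(\closure{A}) \le \inf I(A)$ and $\inf I(A) \le \inf I(\closure{A})$ separately. The first is immediate because $A \subseteq \closure{A}$ implies that every value $I(x)$ for $x \in A$ appears also in $I(\closure{A})$, so the infimum over the larger set cannot exceed the infimum over the smaller one. The work is all in the reverse inequality, and this is where the continuity hypothesis enters.

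For the nontrivial direction, I would fix an arbitrary $x \in \closure{A}$ and aim to show $\inf I(A) \le I(x)$; taking infimum over $x \in \closure{A}$ then yields the claim. Since the ambient space is only assumed Hausdorff (not necessarily first countable), I would avoid sequences and work with nets. By the standard characterization of closure in a topological space, there exists a net $(x_\alpha)_{\alpha\in D}$ contained in $A$ with $x_\alpha \to x$. Continuity of $I$ then gives $I(x_\alpha) \to I(x)$. Since each $x_\alpha$ lies in $A$, we have $\inf I(A) \le I(x_\alpha)$ for every $\alpha$, and passing to the limit along the net yields $\inf I(A) \le I(x)$, as desired.

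The main potential obstacle is the topological subtlety just mentioned: in a general Hausdorff space, $\closure{A}$ is not characterized by limits of sequences, so one must argue with nets (or, equivalently, with the filter of neighborhoods of $x$). The finiteness hypothesis smooths this step, since it ensures that $I(x)$ and $I(x_\alpha)$ are honest real numbers so that the convergence $I(x_\alpha) \to I(x)$ and the inequality $\inf I(A) \le I(x_\alpha)$ carry no $+\infty$ ambiguities. The goodness hypothesis is not strictly needed for this lemma but presumably is carried along because it is required elsewhere in the proof of Theorem \ref{thm:Conditional_LDP}. An alternative formulation, avoiding nets altogether, would be to argue contrapositively: if $\inf I(A) > c$ for some $c < I(x)$, then $A$ is contained in the open set $\{y : I(y) > c\}$ (open by continuity), hence so is $\closure{A}$, contradicting $x \in \closure{A}$ and $I(x) > c$ failing. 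This second route is the one I would probably present, as it sidesteps nets entirely and uses continuity only through the openness of sublevel complements.
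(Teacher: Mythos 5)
Your net-based argument is correct and in fact slightly leaner than the paper's. The paper first uses goodness to produce a point $x_A \in \closure{A}$ attaining $\inf I(\closure{A})$, then assumes $\inf I(\closure{A}) < \inf I(A)$ and uses continuity at $x_A$ to find a neighborhood $U$ of $x_A$ on which $I$ stays below $\inf I(A)$; since $x_A \in \closure{A}$, that neighborhood meets $A$, producing a point of $A$ with $I$-value below $\inf I(A)$ --- a contradiction. Your version dispenses with goodness entirely by proving $\inf I(A) \le I(x)$ for \emph{every} $x \in \closure{A}$ rather than only at a minimizer, and you are right that goodness is not needed here (the paper invokes it only to name the minimizer, which its own neighborhood argument does not actually require). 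The net construction, the passage to the limit in $I(x_\alpha) \ge \inf I(A)$, and the care about the Hausdorff, possibly non-first-countable setting are all sound.

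However, the ``alternative formulation'' you say you would actually present is broken as written. The step ``$A$ is contained in the open set $\set{y : I(y) > c}$, hence so is $\closure{A}$'' is false: containment in an \emph{open} set does not pass to closures (take $A = (0,1) \subset \mathbb{R}$ and the open set $(0,1)$ itself). The repair is to run the openness argument at the point $x$ rather than on the set $A$: since $I(x) < c$, continuity makes $W = \set{y : I(y) < c}$ an open neighborhood of $x$, and $x \in \closure{A}$ forces $W \cap A \neq \varnothing$, which yields a point of $A$ with $I$-value strictly below $\inf I(A)$, a contradiction. This repaired version is essentially the paper's own proof. If you prefer the sequence-free, neighborhood-only route, present it in this corrected form; otherwise your net argument already stands on its own.
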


\begin{proof}
If $A$ is empty the result holds trivially, so we shall assume henceforth that $A$ is nonempty.  As $I$ is everywhere finite, this implies that $\inf I(A) < \infty$.  Since $I$ is a good rate function, there exists at least one $x_A \in \closure{A}$ such that $I(x_A) = \inf I(\closure{A})$.  Furthermore, since $A$ is a nonempty subset of $\closure{A}$, we must have $I(\closure{A}) \le \inf I(A)$.  Thus, $I(x_A) = \inf I(\closure{A}) \le \inf I(A) < \infty$.  We shall now assume only a strict inequality holds, i.e., $\inf I(\closure{A}) < \inf I(A)$, and show that this entails a contradiction.

Assume $\inf I(\closure{A}) < \inf I(A)$ and let $V = (-\infty, \inf I(A))$.  Since $I$ is continuous and $I(x_A) \in V$ by assumption, there exists a neighborhood $U$ of $x_A$ such that $I(U) \subseteq V$.  Since $x_A \in \closure{A}$ and $U$ is a neighborhood of $x_A$, there exists an $x_A' \in U \cap A$.  Since $x_A' \in U$, $I(x_A') \in V$ and hence $I(x_A') < \inf I(A)$.  However, since $x_A' \in A$ as well, $I(x_A') \ge \inf I(A)$.
\end{proof}


\begin{lemma}
\label{lem:topological}
For any two sets $A$ and $B$, $\interior{A} \cap \closure{B} \subseteq \closure{\interior{A} \cap B}$.
\end{lemma}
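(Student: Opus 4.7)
The plan is a direct topological argument: show every point of the left-hand side lies in the closure of $\interior{A} \cap B$ by verifying the neighborhood criterion for closure.

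Specifically, I would fix an arbitrary point $x \in \interior{A} \cap \closure{B}$ and let $V$ be any neighborhood of $x$. The first step is to form $V' := V \cap \interior{A}$. Since $\interior{A}$ is open and contains $x$, and $V$ is a neighborhood of $x$, the intersection $V'$ is again an open set containing $x$, hence a neighborhood of $x$.

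The second step invokes $x \in \closure{B}$: every neighborhood of $x$ meets $B$, so in particular $V' \cap B \neq \emptyset$. Unpacking the definition of $V'$ gives $V \cap \interior{A} \cap B \neq \emptyset$, i.e.\ $V \cap (\interior{A} \cap B) \neq \emptyset$. Since $V$ was arbitrary, $x \in \closure{\interior{A} \cap B}$.

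There is no real obstacle here; the argument is essentially the observation that open sets can be freely slid inside the closure operation. The only subtlety to keep in mind is that the statement would fail if $\interior{A}$ were replaced by $A$ (one cannot push a general set into the closure), which is exactly why the step of absorbing $\interior{A}$ into the neighborhood $V$ is the decisive one.
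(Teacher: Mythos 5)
Your proof is correct and is essentially identical to the paper's: both verify the neighborhood criterion for closure by intersecting an arbitrary neighborhood of $x$ with the open set $\interior{A}$ to obtain a new neighborhood, which must then meet $B$ since $x \in \closure{B}$. No further comment is needed.
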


\begin{proof}
If $\interior{A} \cap \closure{B} = \varnothing$ then we are done, so suppose there exists an $x \in \interior{A} \cap \closure{B}$.  We will have $x \in \closure{\interior{A} \cap B}$ iff for any neighborhood $U$ of $x$ we have that $U \cap (\interior{A} \cap B) \neq \varnothing$.  Now, given $U$, $U \cap \interior{A}$ is also a neighborhood of $x$, so, since $x \in \closure{B}$, it follows that $(U \cap \interior{A}) \cap B \neq \varnothing$.
\end{proof}


\begin{corollary}
\label{cor:final}
If $I$ satisfies Lemma \ref{lem:continuity}, then $\inf I(\interior{A} \cap \interior{B}) = \inf I(\interior{A} \cap \closure{B})$ for any two sets $A$ and $B$.
\end{corollary}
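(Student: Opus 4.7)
The plan is to establish the equality by proving the two inequalities separately. The direction $\inf I(\interior{A} \cap \interior{B}) \ge \inf I(\interior{A} \cap \closure{B})$ is immediate from monotonicity of the infimum together with the trivial inclusion $\interior{B} \subseteq \closure{B}$, so the real content lies in the reverse inequality.

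For that reverse inequality, the key move I would make is to apply Lemma \ref{lem:topological} with the role of its set $B$ played instead by $\interior{B}$. Since $\interior{\interior{B}} = \interior{B}$, this substitution yields
\[
\interior{A} \cap \closure{\interior{B}} \subseteq \closure{\interior{A} \cap \interior{B}}.
\]
I would then invoke the topological condition $\closure{\interior{B}} = \closure{B}$ (the standing hypothesis on the conditioning set in Theorem \ref{thm:Conditional_LDP}, and the intended context in which this corollary will be applied) to rewrite the left-hand side as $\interior{A} \cap \closure{B}$. Monotonicity of the infimum then gives $\inf I(\interior{A} \cap \closure{B}) \ge \inf I(\closure{\interior{A} \cap \interior{B}})$, and Lemma \ref{lem:continuity}, applied to the set $\interior{A} \cap \interior{B}$, collapses the right-hand side to $\inf I(\interior{A} \cap \interior{B})$, closing the loop.

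The main obstacle, and the reason this is not quite a one-line consequence of the two preceding lemmas, is that a direct application of Lemma \ref{lem:topological} with $B$ in its original role only produces $\interior{A} \cap \closure{B} \subseteq \closure{\interior{A} \cap B}$, which would control $\inf I(\interior{A} \cap B)$ rather than the desired $\inf I(\interior{A} \cap \interior{B})$. Recognizing that feeding the open set $\interior{B}$ into the lemma (rather than $B$ itself) is what makes the right-hand side speak about $\interior{A} \cap \interior{B}$, and that the equality $\closure{\interior{B}} = \closure{B}$ is precisely the ingredient needed to keep $\closure{B}$ on the outside, is the crux of the argument. Both supporting results are then used in an essential way: Lemma \ref{lem:topological} converts an interior-closure intersection into the closure of an intersection, and Lemma \ref{lem:continuity} passes from the closure back to the underlying set at the level of the infimum.
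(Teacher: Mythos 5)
Your proof is correct and follows essentially the same route as the paper's: both hinge on applying Lemma \ref{lem:topological} with $\interior{B}$ in place of $B$ and then collapsing the closure at the level of the infimum via Lemma \ref{lem:continuity}, the paper merely compressing the two inequalities into a single circular chain. The one place you are more careful than the paper is in explicitly invoking $\closure{\interior{B}} = \closure{B}$ to pass from $\closure{\interior{B}}$ to $\closure{B}$: the paper's chain literally stops at $\inf I(\interior{A}\cap\closure{\interior{B}})$, and the corollary's ``for any two sets'' phrasing is in fact too strong without that hypothesis (take $A=\mathbb{R}$, $B=\mathbb{Q}$, so that $\interior{A}\cap\interior{B}=\varnothing$ while $\interior{A}\cap\closure{B}=\mathbb{R}$), so your reading of it as a standing assumption inherited from Theorem \ref{thm:Conditional_LDP} is exactly right.
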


\begin{proof}
By Lemmas \ref{lem:continuity} and \ref{lem:topological},
\begin{equation*}
\inf I(\interior{A} \cap \interior{B}) \ge \inf I(\interior{A} \cap \closure{\interior{B}}) \ge \inf I(\closure{\interior{A} \cap \interior{B}}) = \inf I(\interior{A} \cap \interior{B}).
\end{equation*}
\end{proof}

\bigskip
\begin{proof}{(Conditional LDP)}
Observe that the large deviation upper and lower bounds imply that, for any $\varepsilon > 0$ and all $n$ sufficiently large,
\begin{subequations}
\begin{align}
v_n^{-1}\log P_n(A) &< -(1-\varepsilon)\inf I(\closure{A})\:\: \quad\text{for $0 < \inf I(\closure{A}) < \infty$,} \\
v_n^{-1}\log P_n(A) &> -(1+\varepsilon)\inf I(\interior{A})    \quad\text{for $0 < \inf I(\interior{A}) < \infty$.}
\end{align}
\end{subequations}
Similarly, $\inf I(\interior{A}) = 0$ implies $v_n^{-1}\log P_n(A) > -\varepsilon$ for all $n$ sufficiently large.  As $I$ is assumed finite, $\inf I(A) = \infty$ implies $A = \varnothing$.

By Lemma \ref{lem:continuity} $\inf I(\interior{B}) = \inf I(\closure{\interior{B}})$, and, by assumption $\closure{\interior{B}} = \closure{B}$, so $\inf I(\interior{B}) = \inf I(B) = \inf I(\closure{B})$.  As $B$ is assumed nonempty, $\inf I(B) < \infty$, so the large deviation lower bound implies $P_n(B) > 0$ for $n$ sufficiently large.

We begin with the large deviation upper bound.  First assume $0 < \inf I(\closure{A \cap B}) < \infty$ and $\inf I(B) > 0$.  For a given $\varepsilon > 0$ we have that for all $n$ sufficiently large
\begin{align*}
v_n^{-1} \log P_n(A|B)
&= v_n^{-1} \log P_n(A\cap B) - v_n^{-1} \log P_n(B) \\
&< -(1-\varepsilon) \inf I(\closure{A \cap B}) + (1+\varepsilon) \inf I(\interior{B}) \\
&\le -\left[\inf I(\closure{A}\cap\closure{B}) - \inf I(B)\right] + \varepsilon' \\
&= -\inf I_{B}(\closure{A}) + \varepsilon',
\end{align*}
where $\varepsilon' =  \varepsilon \left[\inf I(\closure{A\cap B}) + \inf I(B)\right]$.  As $\varepsilon' > 0$ and may be made arbitrarily small, we conclude that
\begin{equation*}
\limsup_{n\to\infty} \;v_n^{-1} \log P_n(A|B) \le -\inf I_B(\closure{A}).
\end{equation*}
If $0 < \inf I(\closure{A \cap B}) < \infty$ yet $\inf I(B) = 0$, then
\begin{align*}
v_n^{-1} \log P_n(A|B) &< -(1-\varepsilon) \inf I(\closure{A \cap B}) + \varepsilon \\
&\le -\inf I_{B}(\closure{A}) + \varepsilon \left[\inf I(\closure{A\cap B}) + 1\right]
\end{align*}
and the upper bound is again found to hold.

If $\inf I(\closure{A\cap B}) = 0$, then $\inf I(B) = \inf I(\closure{B}) \le \inf I(\closure{A}\cap\closure{B}) \le \inf I(\closure{A \cap B}) = 0$ and $\inf I_B(\closure{A}) = \inf I(\closure{A}\cap\closure{B}) - \inf I(B) \le \inf I(\closure{A \cap B}) - \inf I(B) = 0$.  Since $v_n^{-1} \log P_n(A|B) \le 0 = -\inf I_B(\closure{A})$, the upper bound is clearly satisfied.

If $\inf I(\closure{A\cap B}) = \infty$, then $P_n(A|B) = 0$ and $v_n^{-1} \log P_n(A|B) = -\infty$ for all $n$ sufficiently large.  Thus, $\limsup_{n\to\infty} v_n^{-1} \log P_n(A|B) = -\infty \le -\inf I_B(\closure{A})$.

For the large deviation lower bound, suppose $0 < \inf I(\interior{(A\cap B)}) < \infty$ and note that for all $n$ sufficiently large,
\begin{align*}
v_n^{-1} \log P_n(A|B)
&> -(1+\varepsilon)\inf I(\interior{(A\cap B)}) + (1-\varepsilon)\inf I(\closure{B}) \\
&= -\left[\inf I(\interior{A}\cap\interior{B}) - \inf I(B)\right] - \varepsilon' \\
&= -\inf I_B(\interior{A}) - \varepsilon',
\end{align*}
where $\varepsilon' =  \varepsilon\left[\inf I(\interior{A}\cap\interior{B}) + \inf I(B)\right]$ and Corollary \ref{cor:final} has been used in the last equality.  The second term, $\varepsilon'$ is positive and may be made arbitrarily small, so we conclude
\begin{equation*}
\liminf_{n\to\infty} v_n^{-1} \log P_n(A|B) \ge -\inf I_B(\interior{A}).
\end{equation*}

If $\inf I(\interior{A}\cap\interior{B}) = 0$, then $\inf I(B) = \inf
I(\closure{B}) \le \inf I(\interior{A}\cap\closure{B}) = \inf
I(\interior{A}\cap\interior{B}) = 0$ and $\inf I_B(\interior{A}) =
\inf I(\interior{A}\cap\closure{B}) - \inf I(B) = \inf
I(\interior{A}\cap\interior{B}) - \inf I(B) = 0$.  However, for any
given $\varepsilon > 0$ and all $n$ sufficiently large,
\begin{equation*}
v_n^{-1} \log P_n(A|B) > -\varepsilon + (1-\varepsilon) \inf I(\closure{B}) = -\varepsilon = -\inf I_B(\interior{A}) - \varepsilon.
\end{equation*}
Thus, the lower bound is satisfied in this case.

Finally, if $\inf I(\interior{A}\cap\interior{B}) = \infty$, then $\inf I_B(\interior{A}) = \inf I(\interior{A}\cap\closure{B}) - \inf I(B) = \inf I(\interior{A}\cap\interior{B}) - \inf I(B) = \infty$.  Since $v_n^{-1} \log P_n(A|B) \ge -\infty = -\inf I_B(\interior{A})$, the lower bound is clearly satisfied in this case as well.

To complete the proof, we must show that $I_{B}$ is a good, continuous rate function relative to $\closure{B}$.  Effective continuity of $I_B$ follows from that of $I$.  To show that $I_{B}$ is a good rate function, consider any $\alpha < \infty$ and note that
\begin{align*}
\set{x\in X: I_B(x) \le \alpha} 
&= \set{x\in \closure{B}: I(x)-\inf I(B) \le \alpha} \\
&= \set{x\in X: I(x) \le \alpha + \inf I(B)} \cap \closure{B}.
\end{align*}
We have already established that $\inf I(B) < \infty$.  As $X$ is a Hausdorff space and $I$ is a good rate function, the above intersection is compact, thus establishing that $I_B$ is a good rate function.
\end{proof}


\section{Determination of \protect$\inf I_t(B_0)$}
\label{app:Conditional_Infimum}

For any $(a_1,a_2) \in M \times M$ we have
\begin{equation*}
\begin{split}
I_t(a_1,a_2) &= \lambda_1 a_1 + \lambda_2 a_2 - \Psi_t(\lambda_1,\lambda_2) \\
             &= \lambda_1 a_1 + \lambda_2 a_2 - \Psi_t(\lambda_1,\lambda_2) + I_t(a_0,a_t) - [\lambda_0 a_0 - \Psi_t(\lambda_0,0)],
\end{split}
\end{equation*}
where $a_t = \nabla_2 \Psi_t(\lambda_0,0)$.  If $(a_1,a_2) \in B_0$, then $\lambda_0 a_0 \le \lambda_0 a_1$, so
\begin{equation*}
I_t(a_1,a_2) \ge [\lambda_1 a_1 + \lambda_2 a_2 - \Psi_t(\lambda_1,\lambda_2) - \lambda_0 a_1 + \Psi_t(\lambda_0,0)] + I_t(a_0,a_t).
\end{equation*}
The term is brackets is nonnegative, since
\begin{equation*}
\begin{split}
\lambda_1 a_1 + \lambda_2 a_2 - \Psi_t(\lambda_1,\lambda_2) &= \sup_{\lambda_1',\lambda_2'} [\lambda_1'a_1 + \lambda_2'a_2 - \Psi_t(\lambda_1',\lambda_2)'] \\
&\ge \lambda_0 a_1 + 0\,a_2 - \Psi(\lambda_0,0)
\end{split}
\end{equation*}
with equality iff $(\lambda_1,\lambda_2) = (\lambda_0,0)$.  Thus, $I_t(a_1,a_2) \ge I_t(a_0,a_t)$ for all $(a_1,a_2) \in B_0$.  Since $(a_0,a_t)$ is itself in $B_0$, we conclude that
\begin{equation*}
\inf I_t(B_0) = I_t(a_0,a_t) = \lambda_0 a_0 - \Psi_t(\lambda_0,0),
\end{equation*}
where $a_t = \nabla_2\Psi_t(\lambda_0,0)$.


\section*{Acknowledgments}

This work was supported in part by the Engineering Research Program of the Office of Basic Energy Sciences at the U.S.\ Department of Energy, Grant No.\ DE-FG0394ER14465.
One of us (B.L.) has also received partial funding by the Applied Research Laboratories of the University of Texas at Austin, Independent Research and Development Grant No.\ 926.
Finally, the authors would like to thank the referees for their valuable comments and suggestions.



\newpage
\pagestyle{empty}

\begin{figure}[p]
\caption{\setlength{\baselineskip}{2em}Contour plot of the joint probability density for $(G_{n,0},G_{n,t})$.  The \textit{a priori} most likely macrostate is $a_*$, while $a_0$ is the given initial macrostate.  The conditioning set corresponding to $a_0$ is $B_0$, and $a_t$ is the most likely value of $G_{n,t}$ under this conditioning.}
\label{fig:contours}
\end{figure}

\begin{figure}[p]
\caption{\setlength{\baselineskip}{2em}Plot of the finite-$n$ dynamic free energy, $\Psi_{n,t}(\lambda_1,\lambda_2)$ for $t=1$ and $n=10$.  The observable is the mean spin or ``magnetization'' and the dynamics is that of the sum modulo 2 rule.}
\label{fig:binary-free}
\end{figure}

\begin{figure}[p]
\caption{\setlength{\baselineskip}{2em}Plot of the expected ($\circ$) and actual ($+$) magnetization for a binary lattice evolving according to the sum modulo 2 rule.  The size of the lattice is $n=10,000$.}
\label{fig:binary}
\end{figure}

\begin{figure}[p]
\caption{\setlength{\baselineskip}{2em}Plot of the magnetization, $G_{n,t}$, versus time for a binary lattice.  The discrete bands are indicated by solid horizontal lines with the lowest band being the first.  Bands above the fifth are too closely spaced to be resolved.}
\label{fig:binary-long}
\end{figure}

\newpage

\begin{figure}[ht]
\centerline{\scalebox{0.8}{\includegraphics{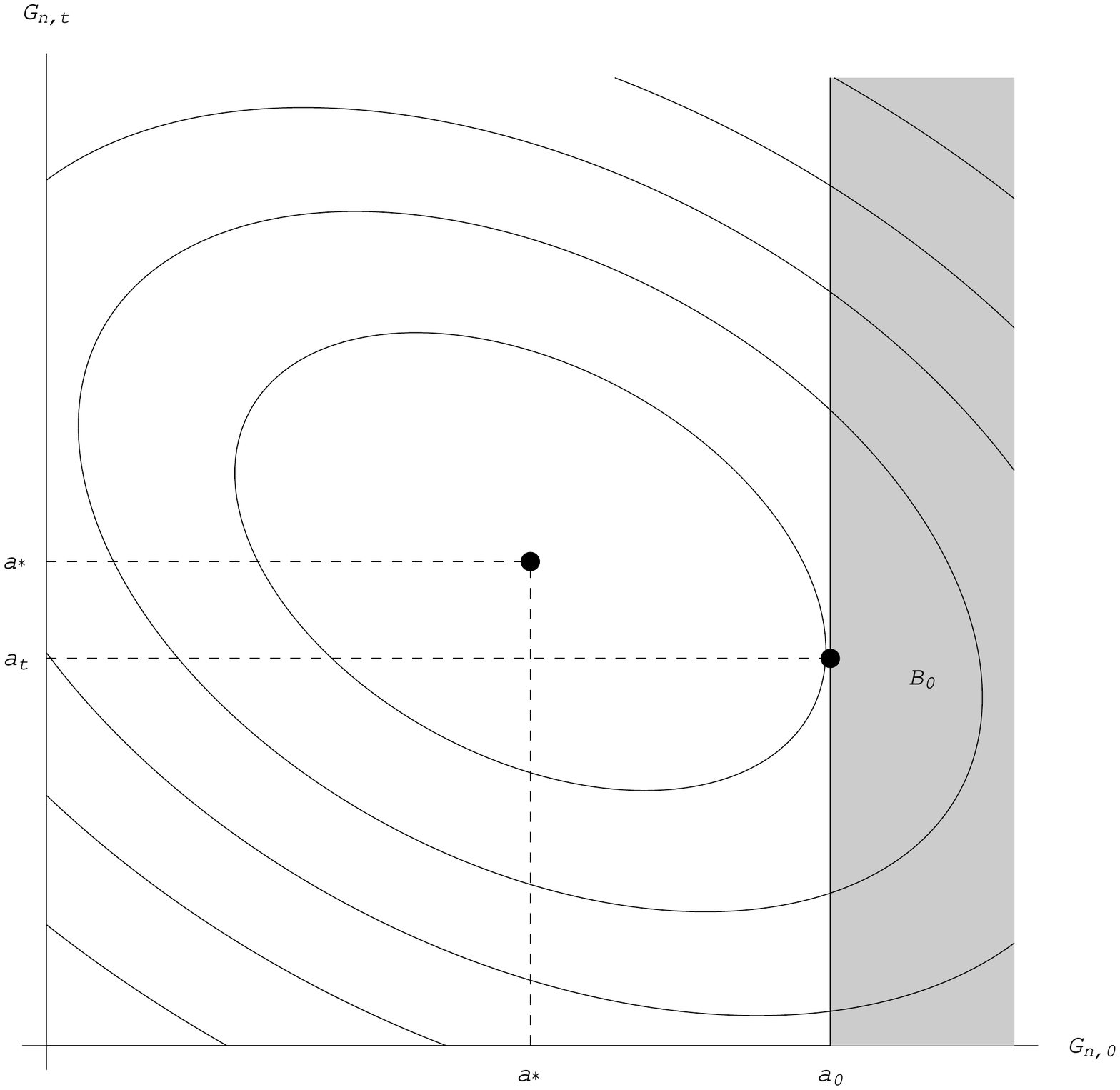}}}
\centerline{\Large Figure \ref{fig:contours}}
\end{figure}

\begin{figure}[ht]
\centerline{\scalebox{2}{\includegraphics{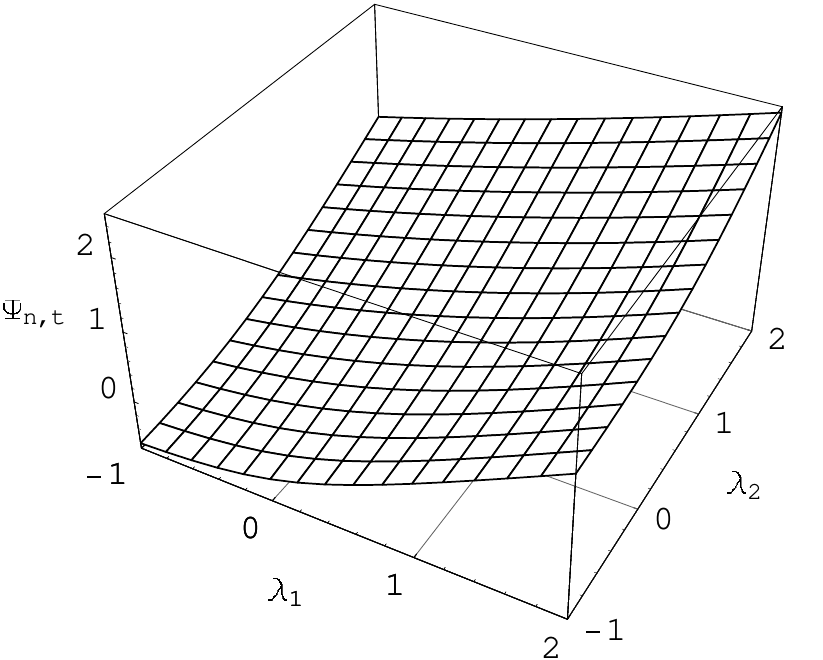}}}
\centerline{\Large Figure \ref{fig:binary-free}}
\end{figure}

\begin{figure}[ht]
\centerline{\scalebox{.65}{\includegraphics{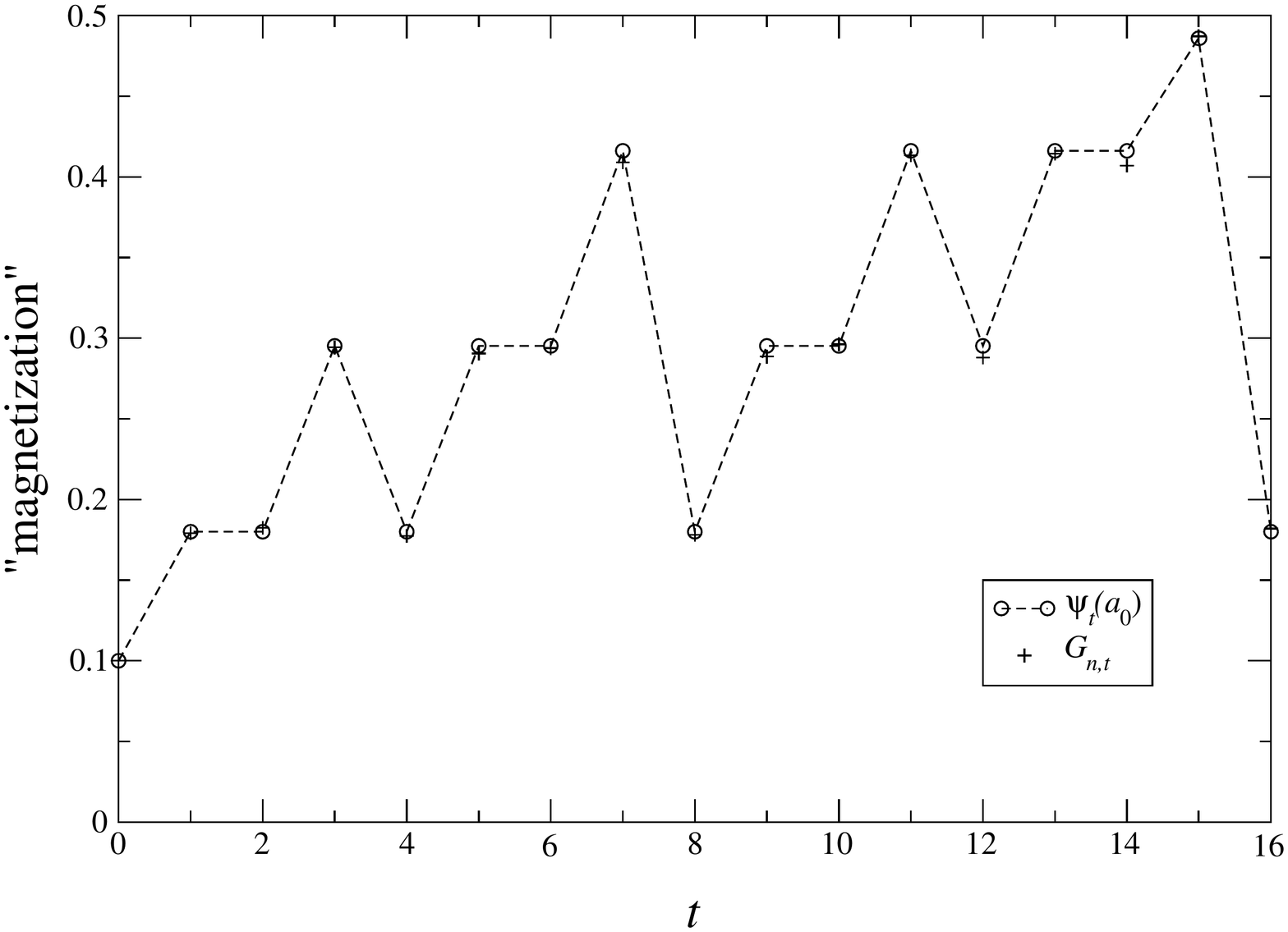}}}
\centerline{\Large Figure \ref{fig:binary}}
\end{figure}

\begin{figure}[hp]
\centerline{\scalebox{.65}{\includegraphics{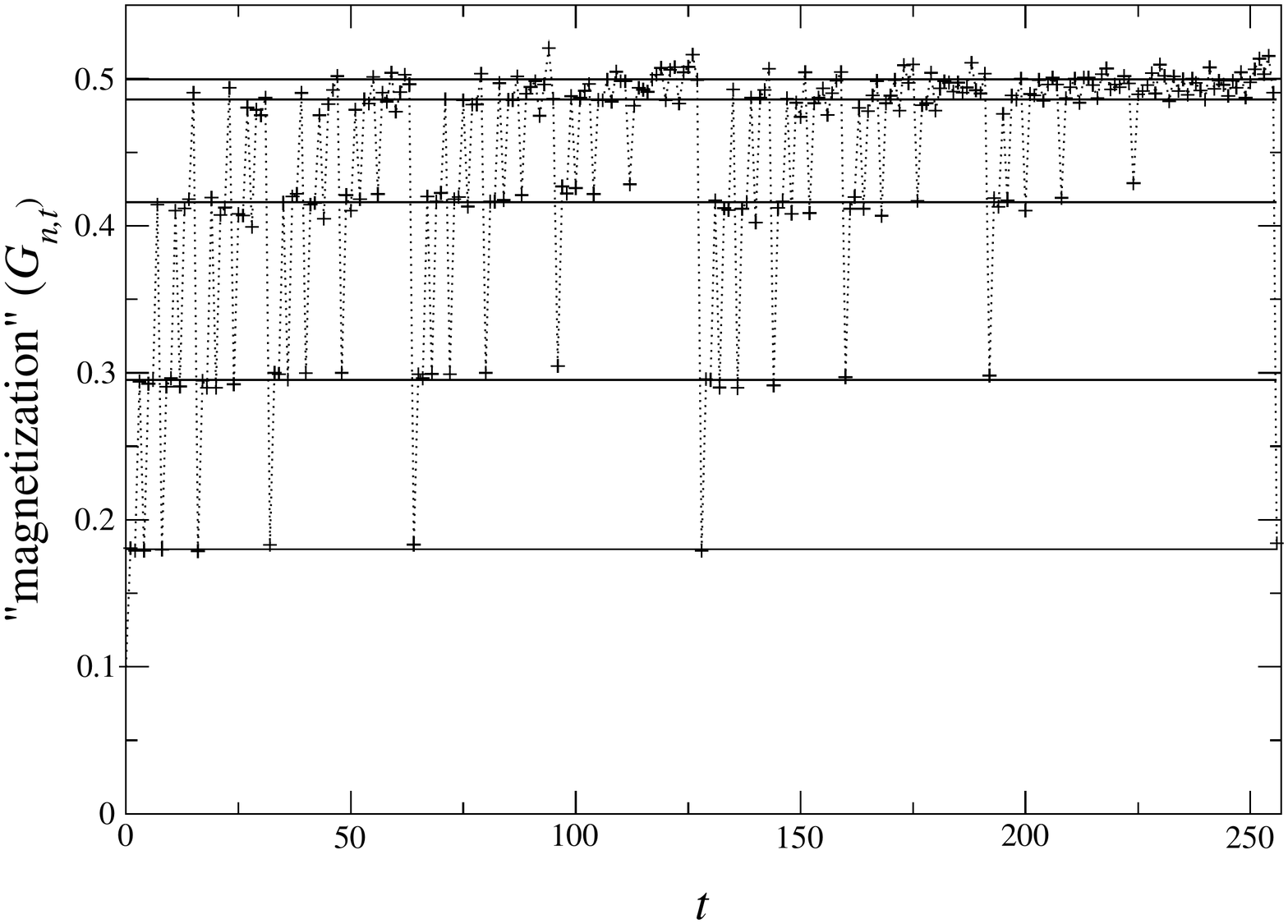}}}
\centerline{\Large Figure \ref{fig:binary-long}}
\end{figure}


\end{document}